\newtheorem{theorem}{Theorem}[section]
\newtheorem{lemma}[theorem]{Lemma}
\theoremstyle{definition}
\newtheorem{definition}[theorem]{Definition}
\newtheorem{proposition}[theorem]{Proposition}
\theoremstyle{remark}
\newtheorem{remark}[theorem]{Remark}
\numberwithin{equation}{section}
\newcommand{\HI}{\mathfrak{H}}
\newcommand{\C}{\mathbb{C}}
\newcommand{\B}{\mathcal{B}}
\newcommand{\quat}{\mathbb H}
\newcommand{\mc}{\mathcal}
\newcommand{\be}{\begin{equation}}
\newcommand{\en}{\end{equation}}
\newcommand{\D}{{\mc D}}
\newcommand{\N}{\mathbb N}
\newcommand{\Hil}{\mc H}
\newcommand{\bedefin}{\begin{defi}}
\newcommand{\findefi}{\end{defi} \medskip}
\newcommand{\betheo}{\begin{theorem}$\!\!${\bf \,\,\,}}
\newcommand{\entheo}{\end{theorem}}
\newcommand{\enth}{\end{theorem}}
\newcommand{\becor}{\begin{cor}$\!\!${\bf .}}
\newcommand{\encor}{\end{cor}}
\newcommand{\belem}{\begin{lem}$\!\!${\bf .}}
\newcommand{\enlem}{\end{lem}}
\newcommand{\bea}{\begin{eqnarray}}
\newcommand{\ena}{\end{eqnarray}}
\newcommand{\beano}{\begin{eqnarray*}}
\newcommand{\enano}{\end{eqnarray*}}
\newcommand{\bee}{\begin{enumerate}}
\newcommand{\ene}{\end{enumerate}}
\newcommand{\bei}{\begin{itemize}}
\newcommand{\eni}{\end{itemize}}
\newcommand{\betab}{\begin{tabular}}
\newcommand{\entab}{\end{tabular}}
\newcommand{\bd}{\begin{displaymath}}
\newcommand{\h}{{\mathfrak H}}
\newcommand{\hk}{{\mathfrak H}_{K}}
\newcommand{\bx}{\mathbf x}
\newcommand{\fk}{\mathfrak{K}}
\newcommand{\oqu}{\overline{q}}
\begin{document}

\title[Quaternionic reproducing kernels ]{ General construction of Reproducing Kernels on a quaternionic Hilbert space }
\author{K. Thirulogasanthar$^{1}$}
\address{$^{1}$ Department of Computer Science and Software
Engineering, Concordia University, 1455 De Maisonneuve Blvd. West,
Montr\'eal, Qu\'ebec, H3G 1M8, Canada. }
\email{santhar@gmail.com}
\author{S. Twareque Ali$^{2}$}
\address{$^2$Department of Mathematics and
Statistics, Concordia University,
Montr\'eal, Qu\'ebec, H3G 1M8, Canada.}
\email{twareque.ali@concordia.ca}

\thanks{The research of S. Twareque Ali was partly supported by the Natural Sciences and Engineering Research Council of Canada (NSERC) and  the research of K. Thirulogasanthar was partly supported by The Fonds de recherche du Qu\'ebec - Nature et technologies (FRQNT)}
\subjclass{Primary 46E22, 47B32, 47B34,
81R30}
\date{\today}
%\dedicatory{This paper is dedicated to our authors.}
\keywords{Quaternion, Reproducing kernel, POVM, Hermite polynomials, Laguerre polynomial}
\begin{abstract}
A general theory of reproducing kernels and reproducing kernel Hilbert spaces on a right quaternionic Hilbert space is presented. Positive operator valued measures and their connection to a class of generalized quaternionic coherent states are examined. A Naimark type extension theorem associated with the positive operator valued measures is proved in a right quaternionic Hilbert space. As illustrative examples, real, complex and quaternionic reproducing kernels and reproducing kernel Hilbert spaces arising from Hermite and Laguerre polynomials are presented. In particular, in the Laguerre case, the Naimark type extension theorem on the associated quaternionic Hilbert space is indicated.
\end{abstract}
\maketitle
\pagestyle{myheadings}
%%%%%%%%%%%%%%%%%%%%%%%%%%%
%%%%%%%%%%%%%%%%%%%%%%%%%%%%%%%%%%%%%%%%%%%%%%%%%%%%%%%%%%%%%%%%%%%%%%
\section{Introduction}

In this paper we study  the problem of defining reproducing kernels on quaternionic Hilbert spaces in a slightly different method from the existing theory. In fact, we present the theory in a physicist point of view.
It is needless to point out the usefulness of reproducing kernels, both in mathematics and physics. The subject has a long history both in the  mathematical and physical literatures (see for example, \cite{Alibk, Aron,Mesh}).
As is well known, for Hilbert spaces over the complex numbers, the existence of a reproducing kernel is basic to defining families of coherent states and frames (see, for example \cite{alirev,Alibk,AliGor}). Here we try to define reproducing kernels on quaternionic Hilbert spaces in basically the same way as on complex Hilbert spaces, following the construction given in \cite{szaf,sz1,sz2}. It turns out that the construction goes through almost verbatim as in the complex case. We then use the kernel to define coherent states and positive operator valued measures and also give some examples which are not available in the literature of quaternionic reproducing kernels.\\

Due to the non-commutativity of the quaternions there are three types of quaternionic Hilbert spaces, the left quaternionic Hilbert space, the right quaternionic Hilbert space and two-sided Hilbert space, depending on how multiplication of vectors by quaternions is defined. Various aspects of reproducing kernel spaces with an indefinite metric in the quaternionic setting is presented in \cite{Alpay}. In particular, reproducing kernels were discussed in the more general setting of Pontryagin quaternionic spaces (see also \cite{al-col} ). In this paper, we take a different approach. We start from a basis to construct reproducing kernels in a quaternionic Hilbert space and prove various results in physical point of view. It may also be of relevance to note that using the single indexed and doubly indexed quaternionic Hermite polynomials a particular class of reproducing kernels and reproducing kernel Hilbert spaces and their associated coherent states have been obtained in \cite{Thi1}. In the quaternionic setting, connections between reproducing kernels and continuous frames have also been discussed in \cite{Kho}.

The article is organized as follows. In section 2, for the sake of completeness, we briefly revisit the theory of complex reproducing kernels. In Section 3 we gather some facts about quaternions, quaternionic Hilbert spaces and quaternionic functional analysis, as needed for the development of the manuscript. We also prove a few facts that are not currently available in the literature. In Section 4.1 we develop a general theory of reproducing kernels and reproducing kernel Hilbert spaces on a right quaternionic Hilbert space, following the procedure used for the case of  complex Hilbert spaces as presented in \cite{Alibk, sz1, sz2}. Using the reproducing kernels, a class of generalized coherent states (CS), positive operator valued (POV) measures and an associated Naimark type extension theorem are studied in Sections 4.2 and 4.3. A reproducing kernel associated with the quaternionic canonical CS is presented in Section 4.4. As illustrative examples, real, complex and quaternionic reproducing kernels and reproducing kernel Hilbert spaces arising from Hermite and generalized Laguerre polynomials are presented in Sections 5.1 and 5.2 respectively. Section 6 ends the manuscript with some conclusions.

\section{The complex reproducing kernels}
Before introducing reproducing kernels on a quaternionic Hilbert spaces, for the sake of completeness, let us introduce it on a complex Hilbert space. However, we shall not go deep into details because the complex theory is very well documented, for example, in \cite{Alibk}. The material of this section is extracted from \cite{Alibk} and \cite{sz2}.\\

Let $X$ be a locally compact set and $\fk_c$ (we reserve the symbol $\fk$ for the quaternion case) a finite dimensional complex Hilbert space of dimension $n$. Let $f_i: X\longrightarrow \fk_c$, $i=1,2,\cdots, N,$ where $N$ could be finite or infinite, be a set of vector valued functions which satisfy the following conditions.
\begin{enumerate}
\item[(a)] For each $x\in X,$
\begin{equation}\label{k1}
0<\sum_{i=1}^N\|f_i(x)\|_{\fk_c}^2<\infty,
\end{equation}
where $\|\cdot\|_{\fk_c}$ denotes the norm in $\fk_c$.
\item[(b)]  For any sequence of complex numbers $c_0, c_1,\cdots c_N,$ such that $\displaystyle \sum_{i=0}^N |c_i|^2<\infty$,
\begin{equation}\label{k2}
\sum_{i=0}^N c_if_i(x)=0\quad\text{for all} \quad x\in X\Leftrightarrow c_i=0 \quad\text{for all}\quad i=0,1,\cdots N.
\end{equation}
\item[(c)] For each $x\in X$ the set of vectors $f_i(x),\quad i=0,1,\cdots, N,$ spans $\fk_c$.
\end{enumerate}
Define a positive definite kernel $K: X\times X\longrightarrow \mathcal{L}(\fk_c)$,
\begin{equation}\label{k3}
K(x,y)=\sum_{i=0}^{\infty}|f_i(x)\rangle\langle f_i(y)|
\end{equation}
the convergence of the sum being guaranteed by \ref{k1}. The positive definiteness of $K$ is understood in the following sense.
\begin{enumerate}
\item [(i)] if $x_1, x_2,\cdots x_m$ is any finite set of points in $X$ and $v_i\in\fk_c, i=1,2,\cdots, m, $ are arbitrary vectors, then
\begin{equation}\label{k4}
\sum_{i,j=1}^m \langle v_i | K(x_i, x_j) v_j\rangle\geq 0.
\end{equation}
\item[(ii)] For each $x\in X$, the operator $K(x,x)\in\mathcal{L}(\fk_c) $ is strictly positive definite, that is,
\begin{equation}\label{k5}
K(x,x)v=0 \Leftrightarrow v=0.
\end{equation}
\end{enumerate}
Now it is straight forward that $K(x,y)=\overline{K(y,x)}$ for all $x,y \in X$. For each $x\in X$ define the function $\xi_x^v: X\longrightarrow\fk_c$ by
\begin{equation}\label{k6}
\xi_x^v(y)=\sum_{i=0}^N\langle f_i(x) | v\rangle f_i(y)=K(y,x)v.
\end{equation}
Using this definition we can also write
\begin{equation}\label{k7}
\xi_x^v=K(\cdot, x)v=\sum_{i=o}^{N}\langle f_i(x) | v\rangle f_i.
\end{equation}
Let $\mathfrak{H}_c$  (the symbol $\HI$ is reserved for the case of quaternions) be the vector space formed by taking the complex linear span of these functions. On $\HI_c$ define a scalar product $\langle\cdot |\cdot\rangle_K$ by
\begin{equation}\label{k8}
\langle\xi_x^u | \xi_y^v\rangle_K=\langle u | K(x,y)v\rangle_{ \fk_c}
\end{equation}
this defines a non-degenerate sesquilinear form and hence a scalar product on $\HI_c$. Take the completion of $\HI_c$ with respect to the norm defined by $\langle\cdot | \cdot\rangle_K$ to get a Hilbert space $\HI_K^c$ ( the symbol $\HI_K$ is reserved for quaternion case).\\

If the dimension of $\fk_c$ is one, $K(x,y)$ is a complex valued function and the vectors in \ref{k7} are $\xi_x=K(\cdot, x),$ $x\in X$.\\

From \ref{k6} and \ref{k8} we have
\begin{equation}\label{k9}
\langle\xi_x^u | \xi_y^v\rangle_K=\langle u | \xi_y^v(x)\rangle_{\fk_c}
\end{equation}
and since an arbitrary element $\Phi\in\HI_K^c$ can be written as a linear combination of the vectors $\xi_y^v$, we get the reproducing property of the kernel:
\begin{equation}\label{k10}
\langle K(\cdot, x)u |\Phi\rangle_K=\langle \xi_x^u | \Phi\rangle_K=\langle u | \Phi(x)\rangle_{\fk_c}.
\end{equation}

If $\fk_c$ is one dimensional \ref{k10} takes the familiar form
$$\langle K(\cdot , x) | \Phi\rangle_K=\langle \xi_x | \Phi\rangle_K=\Phi(x).$$

Let $v^i, i=1,2,\cdots, n $ be an orthonormal basis of $\fk_c$ and set $\xi_x^i=\xi_x^{v^i}.$ The Hilbert space $\HI_K^c$ is called a reproducing kernel Hilbert space with reproducing kernel $K(x, y)$ and the vectors $\{\xi_x^i | x\in X, i=1,2,\cdots, n\}$ its associated coherent states. For an enhanced explanation along these lines of argument we refer the reader to \cite{Alibk}.
\section{Mathematical preliminaries : Quaternions}
For completeness we recall few facts about quaternions, quaternionic Hilbert spaces and the quaternionic functional calculus which may not be very familiar to the reader. For quaternions and quaternionic Hilbert spaces we refer the reader to \cite{Ad}. For quaternionic functional analysis we refer to \cite{Alpay2} and to the review article \cite{Ric} and references therein.
\subsection{Quaternions}
Let $\mathbb{H}$ denote the skew field of quaternions. Its elements are of the form $q=x_0+x_1i+x_2j+x_3k,~$ where $x_0,x_1,x_2$ and $x_3$ are real numbers, and $i,j,k$ are imaginary units such that $i^2=j^2=k^2=-1$, $ij=-ji=k$, $jk=-kj=i$ and $ki=-ik=j$. The quaternionic conjugate of $q$ is defined to be $\overline{q} = x_0 - x_1i - x_2j - x_3k$. Quaternions do not commute in general. However $q$ and $\oqu$ commute, and quaternions commute with real numbers. Also, $|q|^2=q\oqu=\oqu q$ and $\overline{qp}=\overline{p}~\oqu.$ Considered as $\mathbb R^4$, the quaternionic field is measurable and we shall introduce convenient measures, $d\mu$,  on it. For instance $d\mu$ can be taken as a Radon measure or $d\mu=d\lambda d\omega$, where $d\lambda$ is a Lebesgue measure on $\C$ and $d\omega$  is a Harr measure on $SU(2)$. For details we refer the reader to, for example, \cite{Thi1} (page 12).

\subsection{Right Quaternionic Hilbert Space}
Let $V_{\mathbb{H}}^{R}$ be a linear vector space under right multiplication by quaternionic scalars (again $\mathbb{H}$ standing for the field of quaternions).  For $\phi ,\psi ,\omega\in V_{\mathbb{H}}^{R}$ and $q\in \mathbb{H}$, the inner product
$$\langle\cdot\mid\cdot\rangle:V_{\mathbb{H}}^{R}\times V_{\mathbb{H}}^{R}\longrightarrow \mathbb{H}$$
satisfies the following properties
\begin{enumerate}
\item[(i)]
$\overline{\langle\phi \mid \psi \rangle}=\langle \psi \mid\phi \rangle$
\item[(ii)]
$\|\phi\|^{2}=\langle\phi \mid\phi \rangle>0$ unless $\phi =0$, a real norm
\item[(iii)]
$\langle\phi \mid \psi +\omega\rangle=\langle\phi \mid \psi \rangle+\langle\phi \mid \omega\rangle$
\item[(iv)]
$\langle\phi \mid \psi q\rangle=\langle\phi \mid \psi \rangle q$
\item[(v)]
$\langle\phi q\mid \psi \rangle=\overline{q}\langle\phi \mid \psi \rangle$
\end{enumerate}
where $\overline{q}$ stands for the quaternionic conjugate. Conditions (i), (iii) -- (v)
define the sesquilinearity of the scalar product, while (ii) expresses its non-degeneracy.
We assume that the
space $V_{\mathbb{H}}^{R}$ is complete under the norm given above. Then,  together with $\langle\cdot\mid\cdot\rangle$ this defines a right quaternionic Hilbert space, which we shall assume to be separable. Quaternionic Hilbert spaces share most of the standard properties of complex Hilbert spaces. In particular, the Cauchy-Schwartz inequality holds on quaternionic Hilbert spaces as well as the Riesz representation theorem for their duals. Thus, the Dirac bra-ket notation
can be adapted to right quaternionic Hilbert spaces, with $\vert\phi \rangle$ denoting the vector $\phi$ and $\langle\phi \vert$ its dual vector. Moreover, for Hilbert spaces of quaternion valued functions $\phi (x)$, of some variable $x$, we also have
$$(\mid\phi q\rangle) (x)  =(\mid\phi \rangle (x) )q ,\hspace{1cm}(\langle\phi q\mid) (x)  =\overline{q}(\langle\phi \mid (x))\;, $$

Left quaternionic Hilbert spaces, with quaternionic scalar multiplication being defined from the left, may be analogously defined. However, here we shall be using only right quaternionic Hilbert spaces.

Once an orthonormal basis $\vert \phi_i\rangle, \; i \in J$, is chosen in a right quaternionic Hilbert space, a left multiplication by a quaternionic scalar $q$ may also be defined, albeit in a rather non-canonical fashion (see, for example, \cite{Vis,Ric}):
$$ q \vert \phi \rangle  = \sum_{i \in J} \vert \phi_i \rangle q \langle \phi_i\mid \phi\rangle . $$
Consequently, if the Hilbert space consists of quaternion valued functions, we have the rather unconventional result that
$$ (q \vert \phi \rangle )(x) \neq q (\vert \phi \rangle (x)) ). $$

A linear operator $A$, on a right quaternionic Hilbert space will be assumed to act from the left, $ (A,\; \vert \phi \rangle ) \longmapsto A\vert \phi \rangle$ and a scalar multiple $qA, \, q \in \mathbb H$ of such an operator will be defined in the usual way \cite{Ric},
$$ (qA)\vert \phi \rangle  = q (A\vert \phi \rangle). $$
However, many of the usual properties of scalar multiples, that hold on complex Hilbert
spaces, may not hold for left scalar multiples of operators on right quaternionic Hilbert
spaces defined in the above manner \cite{Ric,Mu,Sha}.

\begin{theorem} Let $\phi,\psi\in V_{\quat}^R$, then
\begin{enumerate}
\item[(a)]the Cauchy-Schwarz inequality holds: $|\langle \phi|\psi\rangle|^2\leq\langle\phi|\phi\rangle\langle\psi|\psi\rangle.$
\item[(b)]a polarization identity holds in the following form:
\begin{eqnarray*}
4\langle\phi|\psi\rangle=\|\phi+\psi\|^2&-&\|\phi-\psi\|^2
+(\|\phi i+\psi\|^2-\|\phi i-\psi\|^2)i\\
&+&(\|\phi j+\psi\|^2-\|\phi j-\psi\|^2)j
+(\|\phi k+\psi\|^2-\|\phi k-\psi\|^2)k.
\end{eqnarray*}
\end{enumerate}
\end{theorem}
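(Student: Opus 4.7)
\medskip
\noindent\textbf{Proof proposal.}

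For part (a), the plan is to mimic the standard complex-Hilbert-space argument, but with the multiplier placed on the right so as to respect the right-module structure of $V_{\quat}^R$. If $\psi=0$ both sides vanish, so assume $\psi\neq 0$ and set
\[
q=\frac{\langle\psi\mid\phi\rangle}{\langle\psi\mid\psi\rangle}\in\quat .
\]
I would expand the nonnegative quantity $\|\phi-\psi q\|^2=\langle\phi-\psi q\mid\phi-\psi q\rangle$ using sesquilinearity properties (iii)--(v), together with the identity $\langle\psi q\mid\psi q\rangle=\overline q\langle\psi\mid\psi\rangle q=|q|^2\|\psi\|^2$ (the real factor $\|\psi\|^2$ commutes through). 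This should yield
\[
0\leq \|\phi\|^2-\langle\phi\mid\psi\rangle q-\overline q\langle\psi\mid\phi\rangle+|q|^2\|\psi\|^2 ,
\]
and substitution of the chosen $q$, using $\overline{\langle\phi\mid\psi\rangle}=\langle\psi\mid\phi\rangle$ and $p\overline p=|p|^2$, should collapse the middle three terms to $-|\langle\phi\mid\psi\rangle|^2/\|\psi\|^2$, yielding Cauchy--Schwarz. The only subtle point is verifying that the noncommutativity does not spoil the cancellation; the fact that $\|\psi\|^2$ is real and therefore central is what makes it work.

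For part (b), the plan is to compute each of the four differences $\|\phi\xi+\psi\|^2-\|\phi\xi-\psi\|^2$ for $\xi\in\{1,i,j,k\}$ directly from the definition, and then show that multiplying by $\xi$ on the right extracts precisely one real component of $\langle\phi\mid\psi\rangle$. Expanding gives
\[
\|\phi\xi+\psi\|^2-\|\phi\xi-\psi\|^2=2\bigl(\langle\phi\xi\mid\psi\rangle+\langle\psi\mid\phi\xi\rangle\bigr)
=2\bigl(\overline\xi\langle\phi\mid\psi\rangle+\langle\psi\mid\phi\rangle\xi\bigr),
\]
using property (v) on the left slot and (iv) on the right slot (and $\|\phi\xi\|^2=\|\phi\|^2$ when $|\xi|=1$, so those terms cancel). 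Writing $\langle\phi\mid\psi\rangle=a+bi+cj+dk$ and $\langle\psi\mid\phi\rangle=a-bi-cj-dk$, I would substitute $\xi=1,i,j,k$ in turn: for $\xi=1$ the bracket becomes $2a$; for $\xi=i$ the cross terms cancel to give $4b$; and analogously $4c,4d$ for $\xi=j,k$. Multiplying the $\xi$-th line on the right by $\xi$ then yields $4a,\,4bi,\,4cj,\,4dk$, and summing the four equations gives $4\langle\phi\mid\psi\rangle$, as claimed.

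The main obstacle in both parts is bookkeeping the side of multiplication: one must consistently use $\langle\phi q\mid\psi\rangle=\overline q\langle\phi\mid\psi\rangle$ versus $\langle\phi\mid\psi q\rangle=\langle\phi\mid\psi\rangle q$, and remember that only real scalars are central. In particular, the identity $\overline\xi\langle\phi\mid\psi\rangle+\langle\psi\mid\phi\rangle\xi=2\,\mathrm{Re}(\overline\xi\langle\phi\mid\psi\rangle)$ is the key noncommutative computation, and verifying it for $\xi\in\{i,j,k\}$ requires careful use of the quaternionic multiplication table. Once this lemma-level identity is in hand, both the Cauchy--Schwarz inequality and the four-term polarization identity assemble essentially as in the complex case.
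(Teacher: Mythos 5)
Your proof is correct. Note, however, that the paper itself offers no proof of this theorem at all: it is stated as a standard fact about quaternionic Hilbert spaces (the surrounding text appeals to \cite{Ad} and \cite{Ric}), so there is no in-paper argument to compare against. Your argument supplies exactly the right details: in (a), placing the scalar $q=\langle\psi\mid\phi\rangle/\|\psi\|^2$ on the \emph{right} of $\psi$ is what makes properties (iv)--(v) applicable, and the centrality of the real number $\|\psi\|^2$ is indeed the only thing needed for the three cross terms to collapse to $-|\langle\phi\mid\psi\rangle|^2/\|\psi\|^2$. In (b), the identity $\overline{\xi}\langle\phi\mid\psi\rangle+\langle\psi\mid\phi\rangle\xi=2\,\mathrm{Re}\bigl(\overline{\xi}\langle\phi\mid\psi\rangle\bigr)$ (which holds because the second summand is the conjugate of the first) correctly extracts the components $a,b,c,d$, and since each difference of norms is real, right-multiplication by $\xi$ reassembles $4\langle\phi\mid\psi\rangle$. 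The only blemish is a bookkeeping inconsistency in your summary of the four cases: you call the bracket $2a$ for $\xi=1$ but quote $4b,4c,4d$ (the full differences, i.e.\ twice the bracket) for $\xi=i,j,k$; the uniform statement is that the difference $\|\phi\xi+\psi\|^2-\|\phi\xi-\psi\|^2$ equals $4\,\mathrm{Re}(\overline{\xi}\langle\phi\mid\psi\rangle)$ for each unit $\xi$. This does not affect the validity of the proof.
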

Let
$$\h_{rd}=\{h:V_{\quat}^R\longrightarrow\quat~|~h~~ \mbox{is bounded and right linear}\}$$
be the right dual space of $V_\quat^R$ (it can also be turned into a left quaternionic Hilbert space \cite{Ale, Thi1}). With the usual norm
$$\|h\|=\sup\{|h(\phi)|~|~\|\phi\|=1, \phi\in V_\quat^R\}$$
$\h_{rd}$ is a real vector space and it can be transformed into a quaternionic Hilbert space \cite{Thi1}.
\begin{theorem}(Riesz) For any functional $h\in\h_{rd}$
$$h(\phi)=\langle\psi|\phi\rangle;\quad\phi\in V_\quat^R$$
for a vector $\psi\in V_\quat^R$, and then $\|h\|=\|\psi\|.$
\end{theorem}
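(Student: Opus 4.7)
The plan is to adapt the classical Hilbert-space proof of the Riesz representation theorem, being careful with the non-commutativity of $\mathbb{H}$ and with the handedness of the scalar action on $V_{\mathbb{H}}^R$. If $h = 0$, take $\psi = 0$ and there is nothing to prove. So assume $h \neq 0$ and consider $\ker h$. Since $h$ is bounded and right linear, $\ker h$ is a closed right-linear subspace of $V_{\mathbb{H}}^R$. The parallelogram identity, which one reads off from property (ii) combined with (iii)--(v) (or directly from part (b) of the preceding theorem), still holds in the quaternionic setting, so the standard convex-closed-set minimization argument produces an orthogonal projection onto $\ker h$, and hence a nonzero unit vector $\psi_0 \in (\ker h)^{\perp}$.

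The next step is to extract $h$ from $\psi_0$. Since $h$ is right linear and $h(\psi_0) \neq 0$ (otherwise $\psi_0 \in \ker h \cap (\ker h)^{\perp}$, forcing $\psi_0 = 0$), the vector
\[
\phi - \psi_0\, h(\psi_0)^{-1}\, h(\phi)
\]
lies in $\ker h$ for every $\phi \in V_{\mathbb{H}}^R$. Taking the inner product of $\psi_0$ with this vector, and using properties (iii) and (iv) together with $\|\psi_0\| = 1$, I get
\[
0 = \langle \psi_0 \mid \phi\rangle - h(\psi_0)^{-1} h(\phi),
\qquad\text{hence}\qquad
h(\phi) = h(\psi_0)\,\langle \psi_0 \mid \phi\rangle.
\]
The delicate point is now to rewrite the right-hand side as $\langle \psi \mid \phi\rangle$. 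Because property (v) moves a right scalar out of the left slot as its conjugate, the correct choice is $\psi := \psi_0\, \overline{h(\psi_0)}$; then (v) gives $\langle \psi \mid \phi\rangle = \overline{\overline{h(\psi_0)}}\,\langle \psi_0 \mid \phi\rangle = h(\psi_0)\,\langle \psi_0 \mid \phi\rangle = h(\phi)$, as desired.

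Finally, the norm identity follows from two applications of sesquilinearity and the Cauchy--Schwarz inequality already established in part (a) of the preceding theorem. Using (iv) and (v), I compute $\|\psi\|^2 = h(\psi_0)\langle \psi_0 \mid \psi_0\rangle \overline{h(\psi_0)} = |h(\psi_0)|^2$, so $\|\psi\| = |h(\psi_0)| \leq \|h\|$ (since $\|\psi_0\| = 1$). Conversely, $|h(\phi)| = |\langle \psi \mid \phi\rangle| \leq \|\psi\|\,\|\phi\|$ gives $\|h\| \leq \|\psi\|$, and equality follows. The main obstacle is not conceptual but notational: one must consistently respect the right-module structure and the rule (v), which is why the representing vector must be $\psi_0\,\overline{h(\psi_0)}$ rather than $h(\psi_0)\,\psi_0$ or $\overline{h(\psi_0)}\,\psi_0$. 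The only other place where care is needed is confirming that the orthogonal-complement/projection machinery used to produce $\psi_0$ transfers to the quaternionic Hilbert space setting, which is guaranteed by the parallelogram identity noted above.
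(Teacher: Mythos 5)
Your proof is correct. The paper itself supplies no proof of this theorem --- it is stated as a known fact of quaternionic functional analysis (the surrounding text defers to the cited literature on the dual space) --- so there is nothing in the paper to compare against; your kernel-plus-orthogonal-complement argument is the standard route, carried out correctly. In particular you have handled the one genuinely quaternionic subtlety: since scalars act on the right and property (v) extracts a right scalar from the first slot as its conjugate, the representing vector must be $\psi=\psi_0\,\overline{h(\psi_0)}$ with the scalar on the right, and your verification $\langle \psi_0\overline{h(\psi_0)}\mid\phi\rangle=h(\psi_0)\langle\psi_0\mid\phi\rangle$ is exactly what is needed. The supporting machinery you invoke also transfers without incident: $\ker h$ is closed because $h$ is bounded, the parallelogram law holds since $\langle\phi\mid\psi\rangle+\langle\psi\mid\phi\rangle$ is twice the real part of the inner product (so the projection theorem and the existence of a unit vector in $(\ker h)^{\perp}$ go through), and the norm identity $\|h\|=\|\psi\|$ follows from the Cauchy--Schwarz inequality of part (a) of the preceding theorem exactly as you argue. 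The only omission is uniqueness of $\psi$, but the statement as given does not assert it.
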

Let $A$ be a densely defined right linear operator on $V_\quat^R$. Its adjoint  $A^{\dagger}$ and the domain of the adjoint are defined in the same way as for operators on complex Hilbert spaces. In particular,
an operator $A$ is said to be self-adjoint if $A=A^{\dagger}$. If $\phi \in V_\mathbb{H}^R\smallsetminus\{0\}$, then $|\phi \rangle\langle\phi |$ is a rank one projection operator. For operators $A, B$, by convention, we have
\begin{equation}\label{Rank}
|A\phi\rangle\langle B\phi |=A|\phi \rangle\langle\phi |B^{\dagger}.
\end{equation}
Let $\D(A)$ denote the domain of $A$. $A$ is said to be right linear if
$$A(\phi q+\psi p)=(A\phi )q+(A\psi )p;\quad\forall\phi ,\psi \in\D(A), q, p\in \mathbb{H}.$$
The set of all right linear operators will be denoted by $\mathcal{L}(V_\mathbb{H}^R)$. For a given $A\in \mathcal{L}(V_\mathbb{H}^R)$, the range and the kernel will be
\begin{eqnarray*}
\mbox{ran}(A)&=&\{\psi \in V_\mathbb{H}^R~|~A\phi =\psi \quad\text{for}~~\phi \in\D(A)\}\\
\ker(A)&=&\{\phi \in\D(A)~|~A\phi =0\}.
\end{eqnarray*}
We call an operator $A\in \mathcal{L}(V_\mathbb{H}^R)$ bounded if
\begin{equation*}
\|A\|=\sup_{\|\phi \|=1}\|A\phi \|<\infty.
\end{equation*}
or equivalently, if there exists $K\geq 0$ such that $\|A\phi \|\leq K\|\phi \|$ for $\phi \in\D(A)$. The set of all bounded right linear operators will be denoted by $\B(V_\mathbb{H}^R)$.
The following definition is the same as for operators on a complex Hilbert space\cite{Hel}.
\begin{proposition}\label{SAD}
Let $A\in\B(V_\mathbb{H}^R)$. Then $A$ is self-adjoint if and only if for each $\phi\in V_{\mathbb{H}}^{R}$, $\langle A\phi\mid\phi\rangle\in\mathbb{R}$.
\end{proposition}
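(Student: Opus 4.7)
The forward direction is immediate from the definition of the adjoint together with axiom (i): if $A=A^\dagger$, then
\begin{equation*}
\langle A\phi\mid\phi\rangle \;=\; \langle \phi\mid A^\dagger\phi\rangle \;=\; \langle\phi\mid A\phi\rangle \;=\; \overline{\langle A\phi\mid\phi\rangle},
\end{equation*}
so $\langle A\phi\mid\phi\rangle\in\mathbb{R}$ for every $\phi\in V_{\mathbb{H}}^R$.

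For the converse, my plan is to mimic the familiar complex polarization trick, but upgraded to keep track of the four real dimensions of $\mathbb{H}$. Assuming $\langle A\chi\mid\chi\rangle\in\mathbb{R}$ for every $\chi$, I fix $\phi,\psi\in V_{\mathbb{H}}^R$ and an arbitrary $q\in\mathbb{H}$, and apply the hypothesis to the vector $\phi+\psi q$. Right linearity of $A$ together with axioms (iii)--(v) yields
\begin{equation*}
\langle A(\phi+\psi q)\mid\phi+\psi q\rangle \;=\; \langle A\phi\mid\phi\rangle + \langle A\phi\mid\psi\rangle\, q + \overline{q}\,\langle A\psi\mid\phi\rangle + \overline{q}\,\langle A\psi\mid\psi\rangle\, q.
\end{equation*}
The left side is real by hypothesis; so is $\langle A\phi\mid\phi\rangle$, and the last term on the right equals $|q|^2\langle A\psi\mid\psi\rangle\in\mathbb{R}$ because the real number $\langle A\psi\mid\psi\rangle$ commutes with every quaternion. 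Subtracting these known real contributions, the cross-term combination
\begin{equation*}
\langle A\phi\mid\psi\rangle\, q + \overline{q}\,\langle A\psi\mid\phi\rangle
\end{equation*}
must lie in $\mathbb{R}$ for every $q\in\mathbb{H}$.

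Setting $a:=\langle A\phi\mid\psi\rangle$ and $b:=\langle A\psi\mid\phi\rangle$, the remaining step is a short bookkeeping exercise: substitute $q=1$ and $q=i$ (the choices $q=j,k$ provide consistent, redundant information), expand $aq+\overline{q}b$ into real and imaginary components via the multiplication table of $\mathbb{H}$, and demand that the three imaginary components vanish. The $q=1$ equation forces $a_1=-b_1$, $a_2=-b_2$, $a_3=-b_3$, while the $q=i$ equation then supplies $a_0=b_0$. Hence $b=\overline{a}$, that is, $\langle A\psi\mid\phi\rangle = \overline{\langle A\phi\mid\psi\rangle} = \langle\psi\mid A\phi\rangle$ by axiom (i); since $\phi,\psi$ were arbitrary, the definition of adjoint gives $A^\dagger=A$. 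The main obstacle throughout is the noncommutativity: one cannot collapse $\overline{q}b$ to $b\overline{q}$, and this is exactly the four-dimensional bookkeeping already encoded in the quaternionic polarization identity of Theorem 3.1(b), which provides an alternative, fully explicit route to the same conclusion.
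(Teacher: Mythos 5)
Your proof is correct. Note, however, that the paper does not actually prove Proposition~\ref{SAD}: it simply refers the reader to Fashandi's paper for left quaternionic Hilbert spaces and to Proposition 2.17(b) of Ghiloni--Moretti--Perotti for the sufficiency part. So your argument is not so much a different route as a route where the paper offers none. What you supply is the standard quaternionic polarization argument, correctly adapted to the right-linear conventions of Section 3.2: the expansion of $\langle A(\phi+\psi q)\mid\phi+\psi q\rangle$ respects axioms (iii)--(v), the term $\overline{q}\,\langle A\psi\mid\psi\rangle\,q=|q|^2\langle A\psi\mid\psi\rangle$ is handled correctly using the fact that reals are central in $\mathbb{H}$, and the componentwise check that $aq+\overline{q}b\in\mathbb{R}$ for $q=1$ and $q=i$ forces $b=\overline{a}$ is verifiable directly from the multiplication table (the $q=1$ case kills the imaginary part of $a+b$, and the $i$-component of the $q=i$ case gives $a_0=b_0$). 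The conclusion $\langle A\psi\mid\phi\rangle=\langle\psi\mid A\phi\rangle$ for all $\phi,\psi$ then yields $A^\dagger=A$ since $A$ is bounded and everywhere defined. The one presentational improvement I would suggest is to actually display the two component computations rather than describing them as ``a short bookkeeping exercise,'' since in the noncommutative setting this is precisely the step a skeptical reader will want to see; alternatively, one can bypass the bookkeeping by citing the polarization identity of Theorem 3.1(b) applied to $\sqrt{\cdot}$-free expressions, as you note at the end.
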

\begin{proof}
A proof is given in \cite{Fa} for a left quaternionic Hilbert space $V_{\mathbb{H}}^{L}$, and it can be easily adapted to  $V_{\mathbb{H}}^{R}$. For a proof for the sufficiency part one may also see Proposition 2.17 (b) in \cite{Ric}.
\end{proof}
\begin{definition}\label{Def1}
Let $A$ and $B$ be self-adjoint operators on $V_\mathbb{H}^R$. Then $A\leq B$ ($A$ less or equal to $B$) or equivalently $B\geq A$ if $\langle A\phi|\phi\rangle\leq\langle B\phi|\phi\rangle$ for all $\phi\in V_\mathbb{H}^R$. In particular $A$ is called positive if $A\geq 0.$
\end{definition}
\begin{lemma}\label{expo}
Let $A\in\B(V_\mathbb{H}^R)$. If $A\geq 0$ and self-adjoint, then $A^n\geq 0$ for all $n\in\N$.
\end{lemma}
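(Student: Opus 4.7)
The plan is to reduce the statement to two parity cases after first observing that every positive power of a self-adjoint operator is itself self-adjoint. The two algebraic facts I will use are (i) the identity $(AB)^{\dagger}=B^{\dagger}A^{\dagger}$ for bounded right linear operators on $V_{\mathbb{H}}^{R}$, which follows at once from the definition of the adjoint combined with right linearity of $A$ and $B$, and (ii) the adjoint relation $\langle A\phi\mid\psi\rangle=\langle\phi\mid A^{\dagger}\psi\rangle$, which for self-adjoint $A$ reads $\langle A\phi\mid\psi\rangle=\langle\phi\mid A\psi\rangle$. Both are the right-quaternionic analogues of standard facts recalled in the paper just before the statement.

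First I would show by induction that $A^{n}$ is self-adjoint and bounded for every $n\in\mathbb{N}$. The case $n=1$ is the hypothesis. Assuming $(A^{n-1})^{\dagger}=A^{n-1}$, the composition rule gives $(A^{n})^{\dagger}=(A\cdot A^{n-1})^{\dagger}=(A^{n-1})^{\dagger}A^{\dagger}=A^{n-1}A=A^{n}$, and boundedness follows from $\|A^{n}\|\le\|A\|^{n}$, so $A^{n}\in\mathcal{B}(V_{\mathbb{H}}^{R})$. In particular the inner products $\langle A^{n}\phi\mid\phi\rangle$ lie in $\mathbb{R}$ by Proposition \ref{SAD}, so Definition \ref{Def1} applies.

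Next I would split according to the parity of $n$. If $n=2k$, I write $A^{2k}=A^{k}A^{k}$ and compute, using the self-adjointness of $A^{k}$,
$$
\langle A^{2k}\phi\mid\phi\rangle=\langle A^{k}\phi\mid (A^{k})^{\dagger}\phi\rangle=\langle A^{k}\phi\mid A^{k}\phi\rangle=\|A^{k}\phi\|^{2}\ge 0.
$$
If $n=2k+1$, I factor $A^{2k+1}=A^{k}\cdot A\cdot A^{k}$, set $\psi=A^{k}\phi\in V_{\mathbb{H}}^{R}$, and use self-adjointness of $A^{k}$ together with positivity of $A$:
$$
\langle A^{2k+1}\phi\mid\phi\rangle=\langle A^{k}(A\psi)\mid\phi\rangle=\langle A\psi\mid A^{k}\phi\rangle=\langle A\psi\mid\psi\rangle\ge 0.
$$
By Definition \ref{Def1}, $A^{n}\ge 0$ in either case, which is the claim.

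The only thing that could conceivably fail when transplanting this complex-case argument is the composition law $(AB)^{\dagger}=B^{\dagger}A^{\dagger}$ in the right quaternionic setting, since the paper does not explicitly record it. I would therefore verify it at the outset directly from the defining relation $\langle AB\phi\mid\psi\rangle=\langle\phi\mid(AB)^{\dagger}\psi\rangle$, combined with the right linearity needed to make the sesquilinear identities $\langle\phi q\mid\psi\rangle=\overline{q}\langle\phi\mid\psi\rangle$ and $\langle\phi\mid\psi q\rangle=\langle\phi\mid\psi\rangle q$ compatible with two successive applications of the adjoint. Once that is in hand, no further subtlety from non-commutativity intervenes, because each step above manipulates scalars of the form $\|A^{k}\phi\|^{2}$ or $\langle A\psi\mid\psi\rangle$ which are already real.
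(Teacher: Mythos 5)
Your proof is correct and follows essentially the same route as the paper's: a parity split writing $\langle A^{2k}\phi\mid\phi\rangle=\|A^{k}\phi\|^{2}$ in the even case and $\langle A^{2k+1}\phi\mid\phi\rangle=\langle A(A^{k}\phi)\mid A^{k}\phi\rangle$ in the odd case. The only difference is that you explicitly record the self-adjointness of $A^{n}$ via $(AB)^{\dagger}=B^{\dagger}A^{\dagger}$, which the paper uses tacitly; this is a reasonable bit of extra care but not a different argument.
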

\begin{proof}
If $n$ is even, then
$$\langle A^n\phi|\phi\rangle=\langle A^{\frac{n}{2}}\phi|A^{\frac{n}{2}}\phi\rangle\geq 0;\quad \phi\in V_{\quat}^R.$$
If $n$ is odd and $n>1$, then
$$\langle A^n\phi|\phi\rangle=\langle A(A^{\frac{n-1}{2}}\phi)|A^{\frac{n-1}{2}}\phi\rangle\geq 0;\quad \phi\in V_{\quat}^R.$$
\end{proof}
\begin{theorem}\cite{Ric}\label{IT1}
Let $A\in\B(V_\mathbb{H}^R)$. If $A\geq 0$ then there exists a unique operator in $\B(V_\mathbb{H}^R)$, indicated by $\sqrt{A}=A^{1/2}$ such that $\sqrt{A}\geq 0$ and $\sqrt{A}\sqrt{A}=A$.
\end{theorem}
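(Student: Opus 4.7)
The plan is to adapt the classical construction of a positive square root on a complex Hilbert space, taking care at each step that the non-commutativity of $\mathbb{H}$ does not spoil the positivity arguments. After the rescaling $A\mapsto A/\|A\|$ it suffices to treat the case $0\le A\le I$; writing $B=I-A$, we have $0\le B\le I$, so the task becomes to produce a self-adjoint operator $R\in\B(V_\HQ^R)$ with $0\le R\le I$ solving the fixed-point equation $2R=B+R^2$. Once $R$ is in hand, $\sqrt{A}:=I-R$ satisfies $(I-R)^2=I-2R+R^2=I-B=A$, and since $R\le I$ we get $\sqrt{A}\ge 0$.

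I would build $R$ as the strong limit of the iteration $R_0=0$, $R_{n+1}=\tfrac12(B+R_n^2)$. A straightforward induction shows that each $R_n=p_n(B)$, where $p_n$ is a polynomial with non-negative rational coefficients and $p_n(1)\le 1$ (because $p_{n+1}(1)=\tfrac12(1+p_n(1)^2)$ maps $[0,1]$ into itself). By Lemma \ref{expo} every power $B^k$ is positive, so $R_n\ge 0$; and because $0\le B\le I$ forces $\|B^k\|\le 1$, the chain $\langle R_n\phi\mid\phi\rangle=\sum_k a_k\langle B^k\phi\mid\phi\rangle\le p_n(1)\|\phi\|^2\le\|\phi\|^2$ yields $R_n\le I$. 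Monotonicity comes from the factorisation $R_{n+1}-R_n=\tfrac12(R_n-R_{n-1})(R_n+R_{n-1})$, legitimate because all $R_n$ commute as polynomials in the single operator $B$, and which, by the same induction, remains a polynomial in $B$ with non-negative coefficients, hence positive by Lemma \ref{expo}.

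With $0\le R_n\le R_{n+1}\le I$ in hand, strong convergence $R_n\phi\to R\phi$ follows by the standard monotone-bounded trick: $\langle R_n\phi\mid\phi\rangle$ is a bounded non-decreasing sequence in $\mathbb{R}$ (reality by Proposition \ref{SAD}), hence convergent, and the Cauchy--Schwarz inequality applied to the positive sesquilinear form $\langle(R_m-R_n)\cdot\mid\cdot\rangle$ gives
\begin{equation*}
\|(R_m-R_n)\phi\|^4\le\langle(R_m-R_n)\phi\mid\phi\rangle\cdot\langle(R_m-R_n)^3\phi\mid\phi\rangle\le 8\,\langle(R_m-R_n)\phi\mid\phi\rangle\,\|\phi\|^2,
\end{equation*}
which tends to $0$. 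The limit $R$ is self-adjoint with $0\le R\le I$ and, by strong continuity of operator multiplication on uniformly norm-bounded sequences, also satisfies $2R=B+R^2$, completing the construction of $\sqrt{A}=I-R$.

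For uniqueness, let $T\in\B(V_\HQ^R)$ be any positive operator with $T^2=A$. Then $T$ commutes with $A=T^2$, hence with $B$ and with every $R_n$, and therefore with the strong limit $R$ and with $\sqrt{A}$. Setting $U=\sqrt{A}-T$, I get $U(\sqrt{A}+T)U=(\sqrt{A}^{\,2}-T^2)U=0$, so $\langle\sqrt{A}\,U\phi\mid U\phi\rangle+\langle TU\phi\mid U\phi\rangle=0$ and each non-negative summand must vanish; applying Cauchy--Schwarz to the positive forms $\langle\sqrt{A}\,\cdot\mid\cdot\rangle$ and $\langle T\,\cdot\mid\cdot\rangle$ then forces $\sqrt{A}U\phi=TU\phi=0$, whence $U^2\phi=(\sqrt{A}-T)U\phi=0$ and $\|U\phi\|^2=\langle U^2\phi\mid\phi\rangle=0$, giving $T=\sqrt{A}$. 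The main obstacle I anticipate is precisely this positivity bookkeeping in a non-commutative setting: the convenient complex-case lemma ``the product of two commuting positive operators is positive'' is not directly available here without the very square root one is trying to build, and I sidestep it by keeping every intermediate operator inside the commutative polynomial algebra generated by the single operator $B$, where Lemma \ref{expo} alone suffices.
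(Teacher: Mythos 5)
Your proof is correct, but note that the paper itself offers no proof of Theorem \ref{IT1} at all: it is simply imported from \cite{Ric}, where the square root is produced via the continuous functional calculus for normal operators built on the $S$-spectrum. Your argument is therefore a genuinely different, and much more elementary, route: the classical Riesz--Sz.-Nagy iteration $R_{n+1}=\tfrac12(B+R_n^2)$ with $B=I-A/\|A\|$, kept entirely inside the commutative real polynomial algebra generated by $B$ so that positivity of each iterate and of each increment follows from Lemma \ref{expo} plus real-linearity of the order relation, exactly as you say; this neatly avoids the lemma ``a product of commuting positive operators is positive,'' which in the complex case is usually proved \emph{using} square roots. What your approach buys is self-containedness: every ingredient you invoke (Proposition \ref{SAD} for reality of $\langle R_n\phi\mid\phi\rangle$, Lemma \ref{expo}, Lemma \ref{L2} for $\|B\|\le 1$, and the quaternionic Cauchy--Schwarz inequality for the positive form $\langle C\,\cdot\mid\cdot\rangle$) is already established in the paper, whereas the route through \cite{Ric} requires the full slice functional calculus. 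Indeed, your convergence step could be shortened further by citing the paper's own Theorem \ref{domi} (whose proof uses Lemma \ref{ine}, the same inequality you rederive) to get the strong limit $R$ directly. Two trivial housekeeping points: dispose of the case $A=0$ before rescaling by $\|A\|$, and record that the square root and its uniqueness transfer back from $A/\|A\|$ to $A$ upon multiplying by $\|A\|^{1/2}$; neither affects the substance of the argument.
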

%%%%%%%%%%%%%%%%%%%
 \begin{lemma}\label{L2}\cite{Kho}
		 Let $A\in\B(V_\mathbb{H}^R)$ be a self-adjoint operator, then
 \begin{equation}\label{E13}
 \left\|A\right\|=\sup_{\left\|\phi\right\|=1}\left|\left\langle \phi|A\phi \right\rangle\right|
 \end{equation} 		
\end{lemma}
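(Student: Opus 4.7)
The plan is to prove $\|A\|=M$, where $M:=\sup_{\|\phi\|=1}|\langle\phi\mid A\phi\rangle|$, by a two-sided bound, mirroring the standard complex-Hilbert-space argument via the polarization identity while paying attention to the right-scalar convention on $V_{\quat}^{R}$.

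The inequality $M\leq\|A\|$ is immediate: for $\|\phi\|=1$, the Cauchy--Schwarz inequality (part (a) of the earlier theorem) gives $|\langle\phi\mid A\phi\rangle|\leq\|\phi\|\,\|A\phi\|\leq\|A\|$. For the reverse inequality $\|A\|\leq M$, I would first establish a self-adjoint polarization identity. Expanding $\langle\phi+\psi\mid A(\phi+\psi)\rangle-\langle\phi-\psi\mid A(\phi-\psi)\rangle$ collapses to $2\langle\phi\mid A\psi\rangle+2\langle\psi\mid A\phi\rangle$. Since $A=A^{\dagger}$ and by property (i) of the inner product, $\langle\psi\mid A\phi\rangle=\overline{\langle\phi\mid A\psi\rangle}$, so the expression equals $4\,\mathrm{Re}\langle\phi\mid A\psi\rangle$. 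Crucially, by Proposition \ref{SAD}, both terms on the left are real, so this is an equality of real numbers.

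Applying $|\langle\chi\mid A\chi\rangle|\leq M\|\chi\|^{2}$ to each left-hand term and invoking the parallelogram identity (which holds on any Hilbert space, quaternionic included) then yields
$$4\,|\mathrm{Re}\langle\phi\mid A\psi\rangle|\leq 2M(\|\phi\|^{2}+\|\psi\|^{2}).$$
To convert this into a bound on $\|A\phi\|$, for $\phi$ with $A\phi\neq 0$ I would choose the unit vector $\psi=(A\phi)\,(1/\|A\phi\|)$. Using property (iv) and self-adjointness, one gets $\langle\phi\mid A\psi\rangle=\langle\phi\mid A^{2}\phi\rangle\,(1/\|A\phi\|)=\langle A\phi\mid A\phi\rangle\,(1/\|A\phi\|)=\|A\phi\|$, which is real. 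Setting $\|\phi\|=1$ then gives $\|A\phi\|\leq M$, and taking the supremum over unit $\phi$ yields $\|A\|\leq M$; the case $A=0$ is trivial.

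No serious obstacle is anticipated: the polarization identity carries over from the complex case because self-adjointness (via Proposition \ref{SAD}) keeps the relevant diagonal quantities real, and the normalization $\psi=(A\phi)/\|A\phi\|$ is compatible with right-scalar multiplication since $1/\|A\phi\|$ is a positive real, commuting with every quaternion. The only point that requires genuine attention is verifying that the bra-ket manipulations respect property (iv) on the right rather than the left, which is straightforward.
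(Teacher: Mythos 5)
Your proof is correct. The paper itself gives no argument for this lemma, simply citing \cite{Kho}; your polarization-plus-parallelogram derivation is precisely the standard proof that reference supplies (adapted from the complex case), and you have handled the two points that actually need care in the right-quaternionic setting --- that $\langle\phi\mid A\psi\rangle+\langle\psi\mid A\phi\rangle=4\,\mathrm{Re}\langle\phi\mid A\psi\rangle$ is real by self-adjointness, and that the normalization $\psi=(A\phi)(1/\|A\phi\|)$ uses only a real right scalar, so $\langle\phi\mid A\psi\rangle=\|A\phi\|$ comes out real and the bound $\|A\phi\|\le M$ follows for unit $\phi$.
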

\begin{lemma}\cite{Kho}\label{ine}
Let $A$ be a positive self-adjoint operator on a Hilbert space $V_{\quat}^R$. Then
$$\|A\phi\|^2=\langle A^2\phi|\phi\rangle\leq \|A\|\langle A\phi|\phi\rangle;\quad\phi\in V_{\quat}^R.$$
\end{lemma}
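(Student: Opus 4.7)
The statement splits naturally into an equality and an inequality, and my plan is to handle them separately.

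For the equality $\|A\phi\|^2 = \langle A^2\phi\mid\phi\rangle$, I would simply write
$$\|A\phi\|^2 = \langle A\phi\mid A\phi\rangle = \langle A^2\phi\mid\phi\rangle,$$
where the second step moves one copy of $A$ across the inner product via self-adjointness. This uses only the definition of $A^{\dagger}$ and the hypothesis $A = A^{\dagger}$, and is exactly as in the complex case.

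For the inequality $\langle A^2\phi\mid\phi\rangle \leq \|A\|\,\langle A\phi\mid\phi\rangle$, my plan is to run a Cauchy--Schwarz argument on the sesquilinear form induced by $A$. Define $B(\phi,\psi) := \langle A\phi\mid\psi\rangle$ on $V_{\quat}^R \times V_{\quat}^R$. Using right linearity of $A$ together with the quaternionic sesquilinearity conditions (iv)--(v), one checks $B(\phi q,\psi) = \overline{q}\,B(\phi,\psi)$ and $B(\phi,\psi q) = B(\phi,\psi)q$; Hermiticity $\overline{B(\psi,\phi)} = B(\phi,\psi)$ follows from $A = A^{\dagger}$; and the positivity $B(\phi,\phi) = \langle A\phi\mid\phi\rangle \geq 0$ is exactly the hypothesis $A\geq 0$ (which by Proposition \ref{SAD} is a real number). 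Thus $B$ is a positive semi-definite Hermitian sesquilinear form on a right quaternionic Hilbert space, and the usual Cauchy--Schwarz proof (consider $B(\phi + \psi\lambda,\phi + \psi\lambda) \geq 0$ for a suitable $\lambda \in \quat$, optimizing as in the classical argument) yields
$$|B(\phi,\psi)|^2 \leq B(\phi,\phi)\,B(\psi,\psi).$$

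The decisive step is to apply this with $\psi = A\phi$. This gives
$$|\langle A\phi\mid A\phi\rangle|^2 \leq \langle A\phi\mid\phi\rangle\,\langle A^2\phi\mid A\phi\rangle,$$
and since $\langle A\phi\mid A\phi\rangle = \|A\phi\|^2$ while the Cauchy--Schwarz inequality of Theorem 3.1(a) together with the operator norm bound yields
$$|\langle A^2\phi\mid A\phi\rangle| \leq \|A^2\phi\|\,\|A\phi\| \leq \|A\|\,\|A\phi\|^2,$$
we obtain $\|A\phi\|^4 \leq \|A\|\,\langle A\phi\mid\phi\rangle\,\|A\phi\|^2$. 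Dividing by $\|A\phi\|^2$ (the case $A\phi = 0$ is trivial, both sides being zero) gives the claimed bound.

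The only genuine obstacle I anticipate is justifying Cauchy--Schwarz for the semi-definite form $B$ in the quaternionic setting, where one must be careful with the side on which scalar multiplication acts. The standard quadratic-in-$\lambda$ argument works provided the scalar parameter $\lambda$ is placed on the correct (right) side of $\psi$ so that the quaternionic sesquilinearity produces a real nonnegative quantity $B(\phi,\phi) + \overline{\lambda}B(\psi,\phi) + B(\phi,\psi)\lambda + \overline{\lambda}B(\psi,\psi)\lambda$; choosing $\lambda = -B(\psi,\psi)^{-1}B(\psi,\phi)$ (or handling the degenerate case $B(\psi,\psi)=0$ separately by letting $\lambda$ range over the reals) then yields the inequality in the required form. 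Everything else is bookkeeping.
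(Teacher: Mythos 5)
Your argument is correct. Note that the paper itself offers no proof of this lemma --- it is simply quoted from \cite{Kho} --- so there is nothing internal to compare against; what you have supplied is a genuine, self-contained proof. The equality is exactly as you say, and your route to the inequality, via the generalized Cauchy--Schwarz inequality for the positive semi-definite Hermitian form $B(\phi,\psi)=\langle A\phi\mid\psi\rangle$ applied at $\psi=A\phi$, is sound: the sesquilinearity bookkeeping you describe matches the conventions (iv)--(v) of Section 3.2 (conjugate-linear in the first slot, right-linear in the second), the form is Hermitian by self-adjointness, nonnegative precisely by Definition \ref{Def1}, and the chain $\|A\phi\|^4\leq\langle A\phi\mid\phi\rangle\,\langle A^2\phi\mid A\phi\rangle\leq\|A\|\,\|A\phi\|^2\,\langle A\phi\mid\phi\rangle$ closes the argument after the harmless division by $\|A\phi\|^2$ (both factors in the middle term are nonnegative reals, so no ordering issues with quaternionic multiplication arise). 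The more common proof, and presumably the one in \cite{Kho}, instead invokes the positive square root of Theorem \ref{IT1}: writing $\|A\phi\|^2=\langle\sqrt{A}\phi\mid A\sqrt{A}\phi\rangle$ and applying the bound $\langle\psi\mid A\psi\rangle\leq\|A\|\,\|\psi\|^2$ with $\psi=\sqrt{A}\phi$ gives the result in two lines, since $\|\sqrt{A}\phi\|^2=\langle A\phi\mid\phi\rangle$. Your version buys independence from the (nontrivial in the quaternionic setting) square-root theorem, at the price of having to verify Cauchy--Schwarz for a merely semi-definite quaternionic form --- the one genuine delicacy, which you have correctly identified and handled by putting the scalar on the right and treating the degenerate case separately.
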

\begin{theorem}\label{domi}
Suppose $\{A_n\}_{n=1}^{\infty}$ is a sequence of positive self-adjoint operators on a Hilbert space $V_\quat^R$ such that $A_n\leq A_{n+1}\leq M I_{V_\quat^R},$ for all $n\geq 1$, where $M\in\mathbb{R}$ is a constant and $I_{V_\quat^R}$ is the identity operator on $V_\quat^R$. Then there exists a self-adjoint operator $A:V_{\quat}^R\longrightarrow V_\quat^R$ such that
$$A\phi=\lim_{n\rightarrow\infty}A_n\phi;\quad\forall\phi\in V_\quat^R.$$
In other words $A_n$ converges to $A$ strongly.
\end{theorem}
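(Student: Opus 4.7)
The plan is to adapt the classical monotone-convergence (Vigier) theorem for bounded operators to the right-quaternionic setting. The strategy is to first show, for each fixed $\phi\in V_\quat^R$, that $\{A_n\phi\}$ is Cauchy in norm, then define $A\phi$ as its limit and verify right linearity, boundedness and self-adjointness. The work is short once Lemma \ref{ine} is at our disposal: it lets us trade estimates on the real scalars $\langle B\phi\mid\phi\rangle$ for estimates on $\|B\phi\|$, which is exactly the passage from monotone (``weak'') to strong convergence.

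First I would fix $\phi$ and observe, via Proposition \ref{SAD}, that $\langle A_n\phi\mid\phi\rangle\in\mathbb R$, and via Definition \ref{Def1} that this is an increasing sequence of reals bounded above by $\langle MI_{V_\quat^R}\phi\mid\phi\rangle=M\|\phi\|^2$. Hence it converges, and in particular is Cauchy in $\mathbb R$. For $m>n$, put $B_{m,n}:=A_m-A_n$; this is self-adjoint with $0\leq B_{m,n}\leq MI_{V_\quat^R}$, so Lemma \ref{L2} gives $\|B_{m,n}\|\leq M$. Lemma \ref{ine} applied to $B_{m,n}$ then yields
$$\|(A_m-A_n)\phi\|^2\leq \|B_{m,n}\|\,\langle B_{m,n}\phi\mid\phi\rangle\leq M\bigl(\langle A_m\phi\mid\phi\rangle-\langle A_n\phi\mid\phi\rangle\bigr)\longrightarrow 0.$$
Thus $\{A_n\phi\}$ is Cauchy in $V_\quat^R$; define $A\phi:=\lim_{n\to\infty}A_n\phi$.

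It remains to check the properties of $A$. Right linearity follows from the right linearity of each $A_n$ together with continuity of addition and of right scalar multiplication (the identity $\|\phi q\|=|q|\,\|\phi\|$ is immediate from the axioms of $\langle\cdot\mid\cdot\rangle$). Boundedness, with $\|A\|\leq M$, follows from $\|A_n\|\leq M$ (Lemma \ref{L2}) and $\|A\phi\|=\lim_n\|A_n\phi\|\leq M\|\phi\|$. For self-adjointness, continuity of the inner product gives
$$\langle A\phi\mid\psi\rangle=\lim_{n\to\infty}\langle A_n\phi\mid\psi\rangle=\lim_{n\to\infty}\langle\phi\mid A_n\psi\rangle=\langle\phi\mid A\psi\rangle$$
for all $\phi,\psi\in V_\quat^R$. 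The main potential obstacle is simply confirming that each analytic ingredient genuinely holds in the quaternionic framework; once Proposition \ref{SAD}, Lemma \ref{L2} and Lemma \ref{ine} are granted, the non-commutativity of $\mathbb H$ is harmless because all scalars appearing in the decisive estimates are real.
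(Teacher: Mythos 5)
Your proof is correct, and it takes a genuinely more economical route than the paper's. The paper first manufactures the limit operator $A$ \emph{weakly}: it uses the polarization identity to show that $G(\phi,\psi)=\lim_n\langle\phi\mid A_n\psi\rangle$ exists, invokes the Riesz representation theorem to realize $G(\cdot,\psi)$ as $\langle\cdot\mid A\psi\rangle$, checks linearity, boundedness and self-adjointness of this weakly defined $A$, and only then applies Lemma \ref{ine} to the single difference $A-A_n$ to upgrade weak to strong convergence. You instead apply Lemma \ref{ine} to the differences $B_{m,n}=A_m-A_n$ (which are positive, self-adjoint and bounded by $MI_{V_\quat^R}$, so Lemma \ref{L2} controls $\|B_{m,n}\|$) to show that $\{A_n\phi\}$ is norm-Cauchy outright, and then define $A\phi$ as the strong limit; this bypasses both the polarization identity and the Riesz theorem, and the remaining verifications (right linearity, $\|A\|\leq M$, self-adjointness) follow from continuity of the algebraic operations exactly as you say. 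The decisive estimate is the same in both arguments --- Lemma \ref{ine} applied to a positive difference of the $A_n$'s together with the monotone convergence of the real sequence $\langle A_n\phi\mid\phi\rangle$ --- but your organization is shorter and avoids the auxiliary constant $M'=\max\{M,\|A_1\|\}$ that the paper introduces (unnecessarily, since positivity already forces $0\leq\langle\phi\mid A_n\phi\rangle\leq M\|\phi\|^2$ and hence $\|A_n\|\leq M$). Both proofs are valid; the paper's buys nothing extra here except perhaps a template that would also work without the positivity hypothesis on the individual $A_n$.
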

\begin{proof}
For any $\phi\in V_\quat^R$, the real sequence $\{\langle \phi|A_n\phi\rangle\}_{n=1}^{\infty}$ is non-decreasing and bounded above by $M\|\phi\|^2$, hence converges to a real number $F(\phi)$ (say). For $\phi,\psi\in V_\quat^R$, since $\langle \phi|A_n\psi\rangle=\langle\sqrt{A_n}\phi|\sqrt{A_n}\psi\rangle$, from the polarization identity we can write
\begin{eqnarray*}
\langle \phi|A_n\psi\rangle&=&\frac{1}{4}\{\langle \phi+\psi|A_n(\phi+\psi)\rangle-\langle \phi-\psi|A_n(\phi-\psi)\rangle\\
&+&(\langle \phi i+\psi|A_n(\phi i+\psi)\rangle-\langle \phi i-\psi|A_n(\phi i-\psi)\rangle)i\\
&+&(\langle \phi j+\psi|A_n(\phi j+\psi)\rangle-\langle \phi j-\psi|A_n(\phi j-\psi)\rangle)j\\
&+&(\langle \phi k+\psi|A_n(\phi k+\psi)\rangle-\langle \phi k-\psi|A_n(\phi k-\psi)\rangle)k\}.
\end{eqnarray*}
Therefore, the limit $\displaystyle G(\phi,\psi):=\lim_{n\rightarrow\infty}\langle \phi|A_n\psi\rangle$ exists. {By the Cauchy-Schwarz inequality we have $\langle \phi|A_1\phi\rangle\leq \|A_1\|\|\phi\|^2$. We also have
$$-\langle \phi|A_1\phi\rangle\leq \langle \phi|A_n\phi\rangle\leq M\|\phi\|^2;\quad\forall\phi\in V_\quat^R.$$
Therefore,
$$-\|A_1\|\|\phi\|^2\leq \langle \phi|A_n\phi\rangle\leq M\|\phi\|^2;\quad\forall\phi\in V_\quat^R.$$
Hence, by Lemma (\ref{ine}), we have $\|A_n\|\leq M':=\max\{M, \|A_1\|\}.$
Again by the Cauchy-Schwarz inequality we get
$$|G(\phi,\psi)|\leq M'\|\phi\|\|\psi\|;\quad\forall\phi,\psi\in V_\quat^R.$$
Hence, for fixed $\psi$, $\overline{G(\phi,\psi)}$ is a bounded right linear functional with respect to $\phi\in V_\quat^R.$ Therefore, by the Riesz representation theorem, there exists a unique $A\psi\in V_\quat^R$ such that $\langle A\psi|\phi\rangle=\overline{G(\phi,\psi)}.$ That is,
$$G(\phi,\psi)=\langle \phi|A\psi\rangle;\quad\forall\psi\in V_\quat^R.$$
Let $\phi,\psi,\xi\in V_\quat^R$ and $\alpha, \beta\in\quat$. Then
\begin{eqnarray*}
\langle\xi|A(\phi\alpha+\psi\beta)\rangle&=&G(\xi,\phi\alpha+\psi\beta)
=\lim_{n\rightarrow\infty}\langle\xi|A_n(\phi\alpha+\psi\beta)\rangle\\
&=&\lim_{n\rightarrow\infty}\langle\xi|A_n\phi\alpha+A_n\psi\beta\rangle
=\lim_{n\rightarrow\infty}\langle\xi|A_n\phi\rangle\alpha+\langle\xi| A_n\psi\rangle\beta\\
&=&G(\xi,\phi)\alpha+G(\xi,\psi)\beta=\langle\xi|A\phi\rangle\alpha+\langle\xi| A\psi\rangle\beta\\
&=&\langle\xi|A\phi\alpha+A\psi\beta\rangle.
\end{eqnarray*}
That is, $A(\phi\alpha+\psi\beta)=A\phi\alpha+A\psi\beta$. Hence $A$ is linear. Since $\|A_n\|\leq M'$, $\|A\|\leq M'$ thereby $A$ is bounded. Since
$$\langle A\phi|\psi\rangle=\lim_{n\rightarrow\infty}\langle A_n\phi|\psi\rangle
=\lim_{n\rightarrow\infty}\langle \phi|A_n\psi\rangle=\langle \phi|A\psi\rangle;\quad\forall\phi,\psi\in V_\quat^R,$$
$A$ is self-adjoint. Now, by lemma (\ref{ine}), we have
\begin{eqnarray*}
\|(A-A_n)\phi\|^2=\langle(A-A_n)^2\phi|\phi\rangle
&\leq& \|A-A_n\|\langle(A-A_n)\phi|\phi\rangle\\
&\leq& 2M'\langle(A-A_n)\phi|\phi\rangle\rightarrow 0\quad\mbox{as}~~n\rightarrow\infty,
\end{eqnarray*}
which completes the proof.}
\end{proof}

%%%%%%%%%%%%%%%%%%%%%%%%%%%%%%%%%%%%%%%%%%%%%%%%%%%%%%%%%%%%%%%%%%%%%%%%%%
\section{Reproducing kernels: general set up}\label{sec-rep-ker}
 On complex Hilbert spaces, reproducing kernels are typically defined on
Hilbert spaces of complex-valued functions. However, a more general setting, using
vector valued functions (with values in another, usually finite dimensional, Hilbert space)
have also been used in the literature (see, for example, \cite{alirev,Alibk} and references cited therein).
Here we adopt an analogous general setting in constructing quaternionic reproducing kernel
Hilbert spaces, which of course has spaces of quaternion valued functions as a special
case.

\subsection{The kernel and associated Hilbert space}\label{subsec-ker-hilb-sp}
As before, let $\quat$ be the field of quaternions, $X$ a locally compact space and
$\fk$  a right quaternionic Hilbert space of dimension $n < \infty$. Let $ f_i :X \
\longrightarrow\fk$ for $ i=1, 2, 3,\cdots, N$, where $N$ could be finite or infinite, be
a set of $\fk$-valued functions, on which we define multiplication by a quaternion $q$ from
the right in the obvious manner, $(f_i q)(x) = f_i (x)q$, and we assume these functions
to satisfy the following three conditions.
\begin{enumerate}
\item[(a)] For each $x\in X,$
\begin{equation}\label{n}
 \mathcal{N}(x)=\sum_{i=0}^N \|f_i(x)\|_{\fk}^2< \infty.
\end{equation}
\item[(b)] If $c_0, c_1, c_2, ...,c_N\in\quat$ is any sequence of quaternionic scalars satisfying
 $$
\sum_{i=0}^N|c_i|^2< \infty,$$ then
\begin{equation}\label{in-1}
\sum_{i=0}^N f_i(x)c_i=0 \ \ \mbox{for all } x \in X \  \ \ \mbox{if and only if } \ \
c_i=0, \; \text{for all} \;\; i.
\end{equation}
\item[(c)] For each $x\in X$,
\begin{equation}\label{fk}
\fk=\overline{\mbox{right-span}\{f_i(x)~\vert~i=0,1,\cdots,N\}}.
\end{equation}
\end{enumerate}
\medskip
Note that (c) also implies that $\mathcal N (x) > 0$, for all $x\in X$.
Using these functions we define a kernel
\begin{equation}\label{pre-K}
K: X \times X \longrightarrow \mathcal{L}(\fk)\ \ \mbox{by} \ \  K(x, y)= \sum_{i=0}^N|f_i(x)\rangle\langle f_i(y)| .
\end{equation}

Note that conditions $(a) - (c)$ are exactly the same conditions that are imposed while defining a reproducing kernel on a complex Hilbert space \cite{alirev}.

\begin{proposition}\label{P1}
The operator $K(x,x)$ is bounded and strictly positive.
\end{proposition}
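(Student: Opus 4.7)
The plan is to realize $K(x,x)$ as the strong limit of finite partial sums of positive rank-one operators and then deduce both boundedness and strict positivity from this representation, relying crucially on the monotone convergence theorem (Theorem 3.7) proved earlier.

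First, set $K_M(x,x)=\sum_{i=0}^M |f_i(x)\rangle\langle f_i(x)|$. Each summand is a rank-one operator whose action, by the convention $|\phi\rangle\langle\psi|\,v = \phi\,\langle\psi|v\rangle$, combined with the sesquilinearity axioms (iv) and (v), is self-adjoint: $\langle u|(|f\rangle\langle f|)v\rangle = \langle u|f\rangle\langle f|v\rangle = \overline{\langle f|u\rangle}\langle f|v\rangle = \langle (|f\rangle\langle f|)u|v\rangle$. It is also positive, since $\langle v|(|f\rangle\langle f|)v\rangle = |\langle f|v\rangle|^2 \ge 0$. Hence $K_M(x,x)$ is positive and self-adjoint. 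The sequence is monotone non-decreasing, because $K_{M+1}(x,x)-K_M(x,x)=|f_{M+1}(x)\rangle\langle f_{M+1}(x)|\ge 0$. By Cauchy--Schwarz (Theorem 3.1(a)) and condition (a),
\[
\langle v|K_M(x,x)v\rangle=\sum_{i=0}^M|\langle f_i(x)|v\rangle|^2\le\sum_{i=0}^M\|f_i(x)\|_{\fk}^2\,\|v\|^2\le\mathcal{N}(x)\,\|v\|^2,
\]
so $K_M(x,x)\le\mathcal{N}(x)\,I_{\fk}$ for every $M$.

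Next, apply Theorem 3.7 (domi) with $A_n=K_n(x,x)$ and $M=\mathcal{N}(x)$: there exists a bounded self-adjoint operator, which we identify with $K(x,x)$, such that $K_M(x,x)\phi\to K(x,x)\phi$ strongly for every $\phi\in\fk$. The bound $\|K(x,x)\|\le\mathcal{N}(x)<\infty$ follows from the uniform bound on the approximants (using Lemma 3.4). Positivity of $K(x,x)$ follows from $\langle v|K(x,x)v\rangle=\lim_M\langle v|K_M(x,x)v\rangle=\sum_{i=0}^N|\langle f_i(x)|v\rangle|^2\ge 0$.

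For strict positivity in the sense of \eqref{k5}, suppose $K(x,x)v=0$. Then $\langle v|K(x,x)v\rangle=0$, i.e.\ $\sum_{i=0}^N|\langle f_i(x)|v\rangle|^2=0$, so $\langle f_i(x)|v\rangle=0$ for every $i$. Condition (c) says that the right-span of $\{f_i(x)\}_{i=0}^N$ is dense in $\fk$; hence $v$ is orthogonal to a dense subspace of $\fk$ and must vanish. This yields $K(x,x)v=0\Leftrightarrow v=0$, completing the proof.

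The only delicate point is the handling of the infinite series in the quaternionic setting: since scalars multiply from the right, one must be careful to keep the quaternionic coefficients $\langle f_i(x)|v\rangle$ on the correct side when unpacking $|f_i(x)\rangle\langle f_i(x)|v$, but after this bookkeeping the argument is a direct transcription of the complex case, made rigorous by the monotone convergence theorem 3.7.
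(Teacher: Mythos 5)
Your proof is correct, but it reaches the conclusion by a genuinely different route from the paper. The paper's argument is a direct norm estimate: it bounds $\Vert K(x,x)u\Vert_{\fk}$ by $\sum_{i}\vert\langle f_i(x)\vert u\rangle\vert\,\Vert f_i(x)\Vert_{\fk}\le \mathcal N(x)\Vert u\Vert_{\fk}$ via the triangle inequality and Cauchy--Schwarz, which simultaneously shows that the defining series converges absolutely in the (finite-dimensional, hence complete) space $\fk$ and that the operator is bounded with $\Vert K(x,x)\Vert\le\mathcal N(x)$; strict positivity is then obtained exactly as you do, from $\langle u\vert K(x,x)u\rangle=\sum_i\vert\langle u\vert f_i(x)\rangle\vert^2$ together with condition (c). You instead route the convergence question through the monotone operator convergence theorem (Theorem \ref{domi}), checking that the partial sums form a non-decreasing sequence of positive self-adjoint operators uniformly dominated by $\mathcal N(x)I_{\fk}$, and then extract boundedness from Lemma \ref{L2}. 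This is heavier machinery than the problem requires --- $\fk$ has finite dimension $n$, so once the quadratic form is bounded the operator statement is immediate --- but it has the virtue of giving self-adjointness of the limit for free and of rehearsing exactly the argument the paper later uses for $\sigma$-additivity of the POV measure in Proposition \ref{P13}. One cosmetic remark: your appeal to ``density'' of the right span in condition (c) is the correct general formulation, but since $\fk$ is finite-dimensional the span is in fact all of $\fk$, which is how the paper implicitly reads condition (\ref{fk}). Both treatments of strict positivity are the same argument in contrapositive forms of each other.
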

\begin{proof}
Let $u\in\fk$ and consider
\begin{eqnarray*}
\|K(x,x)u\|_{\fk}&=&\|\sum_{i=0}^Nf_i(x)\langle f_i(x)|u\rangle_\fk\|_\fk\\
&\leq&\sum_{i=0}^N\|f_i(x)\langle f_i(x)|u\rangle_\fk\|_\fk\\
&=&\sum_{i=0}^N|\langle f_i(x)|u\rangle_\fk|~\|f_i(x)\|_{\fk}\\
&\leq&\sum_{i=0}^N\|f_i(x)\|^2_\fk\|u\|_\fk\quad\mbox{by Cauchy-Schwarz}\\
&=&\mathcal N (x)\|u\|_\fk < \infty,
\end{eqnarray*}
 by (a). Thus $K(x,x)$ is a bounded operator.
Let $u\in\fk$ and $u\not=0$, then
\begin{eqnarray*}
\langle u|K(x,x)u\rangle_{\fk}
&=&\sum_{i=0}^N\langle u|f_i(x)\rangle_{\fk}\langle f_i(x)|u\rangle_{\fk}\\
&=&\sum_{i=0}^N|\langle u|f_i(x)\rangle_\fk|^2>0,
\end{eqnarray*}
the last inequality following from condition (c) on the $f_i$.
Hence  $K(x,x)$ is strictly positive.
\end{proof}
Note that $K(x,y)=K(y,x)^*$ for all $x,y\in X$. If $\fk=\quat^n$,
then $K(x,y)$ is a $n\times n$ matrix with quaternionic entries and $K(x,x)$ has positive
non-zero eigenvalues.

Let
$v \in \mathfrak K$ and for each $x\in X$ define the functions
\begin{equation}
\xi_x^{v} :=\sum_{i=0}^N f_i\langle f_i(x)| v \rangle_\fk=K(\cdot,x)v,
\label{xi}
\end{equation}
i.e.,
$$
\xi_x^{v}(y)=\sum_{i=0}^N f_i(y)\langle f_i(x)| v \rangle_\fk=K(y,x)v,
$$
the finiteness of the norm of these vectors being guaranteed by the condition (a) on
the $f_i$.
Note that $\xi^v_x = 0 \Leftrightarrow v =0$. Also, if $q \in \mathbb H$, then $\xi_x^v q = \xi_x^{vq}$.

Let $$\Hil=\mbox{right-span}\{\xi_x^v~\vert~x\in X, v\in\fk\}.$$
A general element  $\phi \in \mathcal H$ consists of finite linear combinations of the type
\bea
\label{lin-comb}
\phi &=& \sum_{i,j} \xi^{v_j}_{x_i}c_{ij}, \quad c_{ij} \in \mathbb H\nonumber\\
     & = & \sum_i \xi^{u_i}_{x_i}, \quad u_i = \sum_j v_j c_{ij} .
\ena
On $\mathcal H$  define a sesquilinear form $\langle \cdot\mid\cdot\rangle_K$, by first setting
\begin{equation}\label{k-norm}
\langle\xi_x^u\vert\xi_y^v\rangle_K=\langle u\vert K(x,y)v\rangle_{\fk}=
              \langle u \mid \xi_y^v(x) \rangle_\fk
\end{equation}
and then extending by linearity, so that  for $\phi \in \mathcal H$,
\be
\label{lin-ext}
  \langle\xi_x^u\vert \phi\rangle_K = \langle u \mid \phi(x) \rangle_\fk.
 \en
It is clear that $\langle \xi^v_x \mid \xi^v_x \rangle_K =0 \Leftrightarrow \xi^v_x  =0$.
Moreover,
\be
\langle \xi^v_x \mid \xi^v_x \rangle_K \leq \Vert v\Vert^2_\fk\; \Vert K(x,x)\Vert.
\label{xi-norm}
\en

We now prove a stronger result, for which we first need a Lemma.
\begin{lemma}\label{pre-schwarz}
For  $\psi\in \mathcal H$ such that $\langle \psi \mid\psi\rangle \neq 0$ and arbitrary
$\phi \in \mathcal H$, we have the inequality
\be\label{soft-schwarz}
  \vert\langle \psi \mid \phi\rangle_K \vert \leq [\langle\psi \mid \psi
  \rangle_K]^{\frac 12}\;[\langle\phi \mid \phi \rangle_K]^{\frac 12}.
\en
\end{lemma}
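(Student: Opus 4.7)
My plan is to prove the inequality by adapting the classical Cauchy--Schwarz trick to the quaternionic setting, after first establishing positive semi-definiteness of the form $\langle \cdot \mid \cdot \rangle_K$ on all of $\mathcal H$.

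First I would verify that $\langle \phi \mid \phi \rangle_K \geq 0$ for every $\phi \in \mathcal H$. Writing $\phi = \sum_i \xi^{u_i}_{x_i}$ as in \eqref{lin-comb} and expanding via (\ref{k-norm}) and the explicit form of $K$, one gets
\[
\langle \phi \mid \phi \rangle_K = \sum_{i,j}\langle u_i \mid K(x_i,x_j) u_j\rangle_\fk = \sum_k \Big\vert \sum_i \langle f_k(x_i) \mid u_i\rangle_\fk \Big\vert^2 \geq 0,
\]
where the last identity uses $\overline{a}\,a = \vert a\vert^2$ in $\mathbb H$ and that $\overline{\sum_i a_i}\,\sum_j a_j = \vert \sum_i a_i\vert^2$. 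This checks both that the form is well defined (independent of the chosen representation) and that it is positive semi-definite.

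Next, since $\langle \psi \mid \psi\rangle_K$ is, by Proposition \ref{SAD}-type reasoning, a positive real number, it has an inverse in $\mathbb H$. Define the quaternion
\[
q := \langle \psi \mid \psi \rangle_K^{-1}\,\langle \psi \mid \phi \rangle_K,
\]
and consider $\omega := \phi - \psi q \in \mathcal H$. Applying the positivity just established and expanding with the sesquilinearity rules (right-linear in the second slot, conjugate-linear on the left), I obtain
\[
0 \leq \langle \omega \mid \omega\rangle_K = \langle \phi \mid \phi \rangle_K - \langle \phi \mid \psi\rangle_K q - \overline{q}\,\langle \psi \mid \phi \rangle_K + \overline{q}\,\langle \psi \mid \psi \rangle_K\,q.
\]
Using that $\langle \psi \mid \psi\rangle_K$ is real, so it commutes with all quaternions, and that $\langle \phi\mid\psi\rangle_K = \overline{\langle\psi\mid\phi\rangle_K}$, each of the last three terms simplifies to $\pm\,\langle\psi\mid\psi\rangle_K^{-1}\,\vert\langle \psi\mid\phi\rangle_K\vert^2$, yielding
\[
0 \leq \langle \phi \mid \phi \rangle_K - \langle \psi\mid\psi\rangle_K^{-1}\,\vert\langle \psi\mid\phi\rangle_K\vert^2,
\]
which rearranges to \eqref{soft-schwarz}.

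The main obstacle is precisely the non-commutativity: one must choose $q$ on the correct side so that $\overline{q}\,\langle\psi\mid\psi\rangle_K\,q$ collapses cleanly, and must invoke the reality of $\langle\psi\mid\psi\rangle_K$ (not merely its positivity in operator sense) to commute it past the quaternionic factors. Everything else is bookkeeping. No completeness or closedness of $\mathcal H$ is needed since both $\phi$ and $\psi$ lie in the algebraic span, so $\omega = \phi - \psi q$ is again in $\mathcal H$ and the positivity step from the first paragraph applies directly.
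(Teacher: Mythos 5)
Your proof is correct and follows essentially the same route as the paper: both first establish $\langle\phi\mid\phi\rangle_K=\sum_k\vert q_k\vert^2\geq 0$ by expanding through the kernel, and then run the standard quaternionic Cauchy--Schwarz trick, the only (cosmetic) difference being that you test positivity on $\phi-\psi\,\langle\psi\mid\psi\rangle_K^{-1}\langle\psi\mid\phi\rangle_K$ while the paper tests it on $\psi q+\phi p$ with $q=-\langle\psi\mid\phi\rangle_K$, $p=\langle\psi\mid\psi\rangle_K$ and divides by the real scalar $p$ at the end. Your attention to the sidedness of the quaternionic factors and to the reality of $\langle\psi\mid\psi\rangle_K$ (which follows from the $\sum_k\vert q_k\vert^2$ computation rather than from Proposition~\ref{SAD}) is exactly what the argument requires.
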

\begin{proof}
First note that  $\langle\phi\mid\phi\rangle_K \geq 0$, for any $\phi \in \mathcal H$.
Indeed, from (\ref{lin-comb}),
\beano
 \langle\phi\mid\phi\rangle_K  & = &  \sum_{i,j}^n \langle \xi_{x_i}^{u_i}
 \mid \xi_{x_j}^{u_j}\rangle_K , \quad n<\infty \\
  & = & \sum_{i,j}^n\sum_{k=0}^N \langle u_i \mid K(x_i , x_j)u_j \rangle_{\mathfrak K},
  \quad \text {by} \;\; (\ref{k-norm})
\enano
Using (\ref{pre-K}) we get
$$ \langle\phi\mid\phi\rangle_K   =  \sum_{k=0}^N \vert q_k\vert^2 \geq 0, \quad q_k =
\sum_{j=0}^n \langle f_k (x_j )\mid u_j\rangle \in \mathbb H\; . $$

Next let $\psi \in \mathcal H$ be such that $\langle \psi \mid\psi\rangle
\neq 0$ and let $\phi \in \mathcal H$ be arbitrary. Consider the vector
$\psi q + \phi p, \; q, p \in \mathbb H$. Then
$\langle \psi q + \phi p\mid \psi q + \phi p \rangle_K \geq 0$. Expanding we get
$$ \vert q\vert^2 \langle\psi\mid\psi\rangle_K + \vert p \vert^2
\langle\phi\mid\phi\rangle_K +
 \overline{q}\langle\psi\mid\phi\rangle_K\; p + \overline{p}\langle\phi\mid\psi\rangle_K\; q
 \geq 0\; . $$
Taking $q = - \langle \psi \mid \phi\rangle$ and $p = \langle \psi \mid \psi\rangle$ in this inequality and dividing out by $p$ we get
$$ \vert\langle \psi\mid\phi\rangle_K\vert^2 \leq \langle\psi\mid\psi\rangle_K\; \langle\phi\mid\phi\rangle_K\; , $$
from which (\ref{soft-schwarz}) follows.
\end{proof}

\begin{proposition}\label{P2}
$\langle~\cdot~|~\cdot~\rangle_K$ defines a scalar product on $\Hil$.
\end{proposition}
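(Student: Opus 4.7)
The plan is to verify the five defining properties (i)--(v) of a quaternionic inner product, as listed in Section 3.2, for the sesquilinear form $\langle \cdot \mid \cdot \rangle_K$ already extended to $\mathcal H$ via \eqref{k-norm}--\eqref{lin-ext}. Properties (iii), (iv), (v) (additivity and right-/left-quaternionic homogeneity) are essentially built into the construction: they hold on generators since $\xi_x^{vq} = \xi_x^v q$ and $\xi_x^{u+u'} = \xi_x^u + \xi_x^{u'}$, and then propagate to arbitrary $\phi \in \mathcal H$ by the very definition of the linear extension. The Hermitian property (i), $\overline{\langle \phi \mid \psi\rangle_K} = \langle \psi \mid \phi\rangle_K$, is checked first on the generators using $K(x,y)^* = K(y,x)$: indeed,
\[
  \overline{\langle \xi_x^u \mid \xi_y^v\rangle_K}
  = \overline{\langle u \mid K(x,y)v\rangle_\fk}
  = \langle K(x,y)v \mid u\rangle_\fk
  = \langle v \mid K(y,x)u\rangle_\fk
  = \langle \xi_y^v \mid \xi_x^u\rangle_K ,
\]
and then extended by sesquilinearity.

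The real content of the proposition is the non-degeneracy clause in (ii), that $\langle \phi \mid \phi\rangle_K = 0$ forces $\phi = 0$. Positivity $\langle \phi \mid \phi\rangle_K \geq 0$ was already established inside the proof of Lemma \ref{pre-schwarz}. To obtain strict positivity, I would combine Lemma \ref{pre-schwarz} with Proposition \ref{P1} as follows. Fix an arbitrary $x\in X$ and any nonzero $u \in \fk$. By Proposition \ref{P1}, $K(x,x)$ is strictly positive, so
\[
  \langle \xi_x^u \mid \xi_x^u\rangle_K = \langle u \mid K(x,x)u\rangle_\fk > 0 .
\]
Thus $\xi_x^u$ is an admissible $\psi$ in Lemma \ref{pre-schwarz}, and the lemma gives
\[
  |\langle \xi_x^u \mid \phi\rangle_K|
  \leq [\langle \xi_x^u \mid \xi_x^u\rangle_K]^{1/2}\,[\langle \phi \mid \phi\rangle_K]^{1/2} = 0 .
\]
Invoking the reproducing identity \eqref{lin-ext}, this says $\langle u \mid \phi(x)\rangle_\fk = 0$ for every $u \in \fk$ and every $x \in X$. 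Non-degeneracy of the $\fk$-inner product then forces $\phi(x) = 0$ pointwise, i.e.\ $\phi$ is the zero function in $\mathcal H$.

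The main (mild) obstacle is precisely justifying that $\phi(x) = 0$ for all $x$ really means $\phi = 0$ as an element of $\mathcal H$: this is legitimate because elements of $\mathcal H$ were defined as actual $\fk$-valued functions on $X$ (right-linear combinations of the functions $\xi_x^v = K(\cdot,x)v$), not as formal linear combinations, so equality of functions is equality in $\mathcal H$. It is also worth remarking that hypothesis (b) of Section \ref{subsec-ker-hilb-sp} plays no direct role in the non-degeneracy step here, but it is what guarantees this identification is consistent, in the sense that a formal combination $\sum_i \xi_{x_i}^{u_i}$ vanishes as a function iff the associated coefficients $q_k = \sum_i \langle f_k(x_i) \mid u_i\rangle$ all vanish. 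With these ingredients assembled, properties (i)--(v) are verified and the sesquilinear form qualifies as a genuine quaternionic scalar product on $\mathcal H$.
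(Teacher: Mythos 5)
Your proof is correct and follows essentially the same route as the paper: sesquilinearity from $K(x,y)^* = K(y,x)$, and non-degeneracy by feeding a coherent state into Lemma \ref{pre-schwarz} and invoking the evaluation identity (\ref{lin-ext}) to conclude $\phi(x)=0$ pointwise. The only difference is that the paper applies the lemma with $\psi = \xi_x^{\phi(x)}$ so as to extract the quantitative bound $\Vert\phi(x)\Vert_\fk \leq \Vert K(x,x)\Vert^{1/2}\,[\langle\phi\mid\phi\rangle_K]^{1/2}$ of (\ref{norms-ineq}), which it reuses later (Propositions \ref{P3} and \ref{P8}), whereas your choice $\psi = \xi_x^u$ gives the qualitative conclusion directly.
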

\begin{proof}
Sesquilinearity is easy to check from the definition and the fact that $K(x,y)^* =
K(y,x)$. We only need to prove non-degeneracy. Let $\phi \in \mathcal H$ be a non-zero vector and suppose $\phi (x) \neq 0$. Then
\beano
\Vert\phi (x)\Vert^2_\fk &=& \langle \phi (x) \mid \phi (x)\rangle_\fk = \langle \xi^{\phi (x)}_x \vert \phi\rangle_K, \quad \text{by} \;\; (\ref{lin-ext})\\
&\leq & [\langle\xi_x^{\phi(x)} \mid \xi_x^{\phi (x)}
  \rangle_K]^{\frac 12}\;[\langle\phi \mid \phi \rangle_K]^{\frac 12}, \quad \text{by Lemma}\;\; \ref{pre-schwarz}\\
&=& [\langle\phi(x) \mid K(x,x)\phi (x)
  \rangle_\fk]^{\frac 12}\;[\langle\phi \mid \phi \rangle_K]^{\frac 12}\\
&\leq & \Vert\phi (x)\Vert\; \Vert K(x,x)\Vert^{\frac 12} \;[\langle\phi \mid \phi \rangle_K]^{\frac 12}.
\enano
Thus,
\be\label{norms-ineq}
\Vert\phi (x)\Vert_\fk \leq \Vert K(x,x)\Vert^{\frac 12} \;[\langle\phi \mid \phi \rangle_K]^{\frac 12},
\en
from which we see that $\langle\phi\mid\phi\rangle_K = 0 \; \Leftrightarrow\; \phi = 0$.
\end{proof}
For $\phi \in \mathcal H$, we define the norm
$\Vert \phi \Vert_K = [\langle \phi\mid\phi\rangle_K]^{\frac 12}$ and take the
right quaternionic completion of $\mathcal H$
in this norm to get a right quaternionic Hilbert space $\hk$.
We now show that $\hk$ is a reproducing kernel Hilbert space, with kernel $K(x,y)$.
\begin{proposition}\label{P3}
The kernel $K(x,y)$ is a reproducing kernel for  $\hk$.
\end{proposition}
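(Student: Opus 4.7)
The plan is to leverage formula (\ref{lin-ext}), which already gives the reproducing property for elements of $\mathcal H$, and then extend it continuously to all of $\hk$. The key tool is the pointwise-evaluation inequality (\ref{norms-ineq}) established in Proposition \ref{P2}, which says that for every $x \in X$ and every $\phi \in \mathcal H$,
\begin{equation*}
\Vert\phi(x)\Vert_\fk \leq \Vert K(x,x)\Vert^{\frac 12}\, \Vert \phi\Vert_K.
\end{equation*}
This means that, for each fixed $x$, the evaluation map $\mathcal H \ni \phi \mapsto \phi(x) \in \fk$ is a bounded right $\quat$-linear map with respect to the $K$-norm, hence extends uniquely by continuity to a bounded map $E_x: \hk \to \fk$.

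Concretely, I would first take any $\phi \in \hk$, pick a Cauchy sequence $\{\phi_n\}_{n \geq 1}\subset \mathcal H$ with $\Vert\phi_n - \phi\Vert_K \to 0$, and observe that $\{\phi_n(x)\}$ is Cauchy in $\fk$ by (\ref{norms-ineq}). I would then define $\phi(x) := \lim_{n\to\infty}\phi_n(x) = E_x\phi$, checking that this limit is independent of the chosen Cauchy sequence (again by (\ref{norms-ineq})), so elements of $\hk$ are bona fide $\fk$-valued functions on $X$ in a consistent way. This identification is what one needs in order for the statement ``$K$ is a reproducing kernel for $\hk$'' to even make sense.

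For the reproducing property itself, fix $x \in X$ and $v \in \fk$. For $\phi_n \in \mathcal H$ we have from (\ref{lin-ext}) that
\begin{equation*}
\langle \xi_x^v \mid \phi_n\rangle_K = \langle v \mid \phi_n(x)\rangle_\fk.
\end{equation*}
The left-hand side converges to $\langle \xi_x^v \mid \phi\rangle_K$ by continuity of the inner product (since $\xi_x^v \in \mathcal H \subset \hk$ is a fixed vector and $\phi_n \to \phi$ in $\hk$), while the right-hand side converges to $\langle v \mid \phi(x)\rangle_\fk$ by the definition of $\phi(x)$ above and continuity of $\langle v\mid\cdot\rangle_\fk$. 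This yields
\begin{equation*}
\langle K(\cdot, x)v\mid \phi\rangle_K = \langle \xi_x^v \mid \phi\rangle_K = \langle v \mid \phi(x)\rangle_\fk \quad\text{for all } \phi \in \hk,
\end{equation*}
which is the reproducing property.

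The only step that requires any care is the well-definedness of pointwise evaluation on $\hk$; once (\ref{norms-ineq}) is in hand this is routine, so I do not expect a genuine obstacle. Everything else is a direct sesquilinear-extension-by-continuity argument, and the right-$\quat$-linearity of $E_x$ follows automatically because both the $K$-inner product and the $\fk$-inner product are sesquilinear in the convention of conditions (iv)--(v) of the inner product on $V_\quat^R$.
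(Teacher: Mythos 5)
Your proposal is correct and follows essentially the same route as the paper: the paper's proof also consists of using the bound (\ref{norms-ineq}) to extend the identity (\ref{lin-ext}) by continuity from the dense span $\mathcal H$ to all of $\hk$. You merely spell out the details (well-definedness of pointwise evaluation on the completion via Cauchy sequences) that the paper leaves implicit with the phrase ``it is easy to extend.''
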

\begin{proof}
Using (\ref{norms-ineq}) it is easy to extend (\ref{lin-ext}) to arbitrary $\phi \in \hk$. Thus,
for any $\phi \in \hk$ and $u \in \fk$,
\be
\langle K(\cdot , x)u \mid \phi \rangle_K = \langle \xi_x^u \mid \phi \rangle_K = \langle u\mid \phi (x)\rangle_\fk \; ,
\label{eval-map}
\en
which is the reproducing property of the kernel.
\end{proof}
In the case where $\fk = \mathbb H$, we just get,
$$
  \langle K(\cdot, x)\mid \phi\rangle_K = \phi (x), $$
which is the more standard form of the reproducing property. Just as in the case of reproducing kernels on complex Hilbert spaces, we have the following useful result.

\begin{proposition}\label{ker-unique}
Given the Hilbert space $\hk$, the reproducing kernel is unique. Moreover, if $g_i, \;\; i = 0, 1,2, \ldots, N$, is any orthonormal basis of $\hk$, then
\be
   K(x, y) = \sum_{i=0}^N \vert g_i (x) \rangle \langle g_i (y) \vert\; .
\label{ker-rep}
\en
\end{proposition}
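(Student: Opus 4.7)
The plan is to handle the two claims separately, starting with uniqueness and then deriving the basis expansion of $K$.

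For uniqueness, suppose $K$ and $K'$ are both reproducing kernels for $\hk$. Fix $x,y\in X$ and $u,v\in\fk$. Applying the reproducing property of $K$ to the vector $\phi = K'(\cdot , y)v \in \hk$ gives
\[
\langle K(\cdot , x)u \mid K'(\cdot , y)v\rangle_K = \langle u \mid K'(x,y)v\rangle_\fk .
\]
On the other hand, applying the reproducing property of $K'$ to $K(\cdot , x)u \in \hk$ and then using conjugate symmetry of the inner product together with the identity $K(y,x)^{*}=K(x,y)$ (already noted right after Proposition~\ref{P1}), I would convert the left-hand side into $\langle u \mid K(x,y)v\rangle_\fk$. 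Since this holds for all $u,v\in\fk$, it forces $K(x,y)=K'(x,y)$.

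For the basis formula, let $\{g_i\}_{i=0}^{N}$ be an orthonormal basis of $\hk$. For fixed $x\in X$ and $u\in\fk$ the vector $\xi_x^{u}=K(\cdot ,x)u$ lies in $\hk$, so in the right quaternionic sense
\[
K(\cdot , x)u \;=\; \sum_{i=0}^{N} g_i \, \langle g_i \mid K(\cdot , x)u\rangle_K .
\]
To compute the coefficients I would use the reproducing property (\ref{eval-map}) applied with $\phi = g_i$, obtaining $\langle K(\cdot , x)u \mid g_i\rangle_K = \langle u \mid g_i(x)\rangle_\fk$, and then conjugate to get $\langle g_i \mid K(\cdot , x)u\rangle_K = \langle g_i(x)\mid u\rangle_\fk$. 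Evaluating the series at $y$ (which is legitimate because the pointwise evaluation $\phi\mapsto\phi(y)$ is a bounded map $\hk \to \fk$, by the inequality (\ref{norms-ineq})) produces
\[
K(y,x)u \;=\; \sum_{i=0}^{N} g_i(y)\, \langle g_i(x)\mid u\rangle_\fk ,
\]
which is exactly the action of $\sum_i |g_i(y)\rangle\langle g_i(x)|$ on $u$. Swapping the roles of $x$ and $y$ yields (\ref{ker-rep}).

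The subtlety I anticipate lies not in any deep computation but in correctly handling the noncommutative bookkeeping: in a right quaternionic Hilbert space the Fourier expansion is $\phi=\sum_i g_i\langle g_i\mid\phi\rangle_K$ with the quaternionic coefficients on the right of the basis vectors, and one must be careful when taking the adjoint of quaternion-valued inner products. Establishing that evaluation at $y$ commutes with the $\hk$-convergent series requires invoking the continuity estimate (\ref{norms-ineq}) from Proposition~\ref{P2}; once that step is in place, the rest is a straightforward rearrangement.
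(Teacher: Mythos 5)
Your proof is correct: the uniqueness argument (playing the reproducing property of each kernel off against the other and using $K(y,x)^{*}=K(x,y)$) and the basis formula (Fourier-expanding $K(\cdot,x)u$ in the $g_i$ and evaluating termwise via the bounded evaluation map of (\ref{norms-ineq})) constitute exactly the standard complex-case argument that the paper explicitly omits, adapted with the correct right-quaternionic bookkeeping. Nothing is missing; your attention to keeping the quaternionic coefficients on the right of the basis vectors and to justifying the interchange of evaluation with the $\hk$-convergent series covers the only points where the quaternionic setting requires extra care.
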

The proof is entirely analogous to that for the complex case (see, for example, \cite{Aron,Mesh}) and we omit it here.

Let $\{v_i~|~i=i,2,\cdots,n\}$ be an orthonormal basis for $\fk$ and set $\xi_x^i:=\xi_x^{v_i}$.
\begin{proposition}\label{P4}
For each fixed $x\in X$, the set of vectors $\xi_x^i,~i=1,2,\cdots,n$, is linearly independent.
\end{proposition}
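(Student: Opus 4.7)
The plan is to reduce the linear independence of the $\xi_x^i$ in $\hk$ to the linear independence of the orthonormal basis $\{v_i\}$ in $\fk$, via the right-linear map $v \mapsto \xi_x^v$.

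First I would record the right-linearity of this map: from the definition $\xi_x^v = \sum_{i=0}^N f_i\langle f_i(x)\mid v\rangle_\fk$, together with properties (iii) and (iv) of the quaternionic inner product, one gets
\[
\xi_x^{v+w} = \xi_x^v + \xi_x^w, \qquad \xi_x^{vq} = \xi_x^v q, \quad q\in\quat.
\]
Hence, for any quaternionic scalars $c_1,\dots, c_n$,
\[
\sum_{i=1}^n \xi_x^i c_i \;=\; \sum_{i=1}^n \xi_x^{v_i}c_i \;=\; \xi_x^{\,w}, \qquad w=\sum_{i=1}^n v_i c_i\in\fk.
\]
Thus assuming $\sum_i \xi_x^i c_i = 0$, the problem reduces to showing that $\xi_x^w = 0$ forces $w=0$, after which linear independence of $\{v_i\}$ in $\fk$ finishes the job.

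To establish the injectivity $\xi_x^w = 0 \Rightarrow w = 0$, I would use the defining formula (\ref{k-norm}) with $u=v=w$ and $y=x$:
\[
\langle \xi_x^w \mid \xi_x^w\rangle_K \;=\; \langle w\mid K(x,x)w\rangle_\fk .
\]
By Proposition~\ref{P1}, $K(x,x)$ is strictly positive on $\fk$, so if $w\neq 0$ the right-hand side is strictly positive, contradicting $\xi_x^w=0$. (Equivalently, one can just observe that $K(x,x)$ is injective, take the inner product of $\sum_i \xi_x^{v_i}c_i=0$ with each $\xi_x^{v_j}$, obtain $\langle v_j\mid K(x,x)w\rangle_\fk = 0$ for every $j$, and conclude $K(x,x)w=0$, whence $w=0$.) Combining, $w = \sum_i v_i c_i = 0$, and since $\{v_i\}$ is a basis of $\fk$, every $c_i=0$.

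There is no real obstacle here; the only point that needs care is the correct handling of quaternionic scalars on the right when manipulating $\xi_x^v$, so that the identity $\sum_i \xi_x^{v_i}c_i = \xi_x^{\sum_i v_i c_i}$ is justified by the sesquilinearity conventions (iii)--(iv) rather than taken for granted. Once this bookkeeping is done, the statement follows immediately from Proposition~\ref{P1}.
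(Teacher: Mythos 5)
Your proof is correct and follows essentially the same route as the paper: rewrite $\sum_i \xi_x^{v_i}c_i$ as $\xi_x^{w}$ with $w=\sum_i v_ic_i$ and reduce to the injectivity of $v\mapsto \xi_x^v$. The only difference is that you explicitly justify $\xi_x^w=0\Rightarrow w=0$ via the strict positivity of $K(x,x)$ from Proposition~\ref{P1}, a fact the paper simply records as a note after equation~(\ref{xi}); this is a welcome bit of added rigor but not a different argument.
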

\begin{proof}
Indeed, let $q_i \in \mathbb H, \;\; i=1,2,3. \ldots , n$, and suppose $\phi:= \sum_{i=1}^n \xi_x^i q_i = 0$. But then
$$ \phi = \sum_{i=1}^n \xi_x^{v_i q_i} = \sum_{i=1}^n K(\cdot , x )v_iq_i = \xi_x^u , \quad \text{where,}\;\; u = \sum_{i=1}^n v_i q_i \; .$$
Thus, $\phi = 0 \; \Leftrightarrow \; u =0$ and since the $v_i$ form a basis, this
implies that $q_i =0$, for each $i$, i.e., the $\xi^i_x, \; i =1,2,3, \ldots , n$, are linearly independent.
\end{proof}
\begin{definition}\label{D1} Let $\{v_i~|~i=i,2,\cdots,n\}$ be an orthonormal basis for $\fk$ and set $\xi_x^i:=\xi_x^{v_i}$.
The Hilbert space $\hk$ is called a {\em (right) quaternionic reproducing kernel Hilbert space} (QRKHS)
with reproducing kernel $K(x,y)$. The set of vectors
\be
\mathfrak{G}_K=\{\xi_x^i~|~x\in X, i=1,2,\cdots,n\}
\label{qcsbasis}
\en
is called its associated set of (generalized) {\em quaternionic coherent states (QCS)}.
\end{definition}
Define the  operator-valued function  $F_K:X\longrightarrow \mathcal L (\hk)$ by
\begin{equation}\label{FK}
F_K(x)=\sum_{i=1}^n|\xi_x^i\rangle\langle\xi_x^i|
\end{equation}
\begin{proposition}\label{P5}
For each $x\in X$, the operator $F_K(x)$ is a rank-$n$,  positive and bounded operator.
\end{proposition}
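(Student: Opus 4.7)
The plan is to verify the three claims—boundedness, positivity, and rank exactly $n$—separately, using only the structure already developed and, in particular, Proposition~\ref{P4}, Proposition~\ref{SAD}, and Definition~\ref{D1}.

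First I would establish self-adjointness and boundedness. By the convention (\ref{Rank}), each rank-one operator $\vert \xi_x^i \rangle \langle \xi_x^i \vert$ is self-adjoint, with operator norm $\Vert \xi_x^i \Vert_K^2$. Since $F_K(x)$ is a finite sum of such operators, it is self-adjoint and bounded, with
\[
\Vert F_K(x)\Vert \;\leq\; \sum_{i=1}^n \Vert \xi_x^i \Vert_K^2,
\]
and each $\Vert \xi_x^i \Vert_K^2 = \langle v_i \mid K(x,x) v_i\rangle_{\fk}$ is finite by Proposition~\ref{P1}.

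Next I would handle positivity by a direct computation. For any $\phi \in \hk$, applying the definition of $F_K(x)$ gives
\[
F_K(x)\phi \;=\; \sum_{i=1}^n \xi_x^i\,\langle \xi_x^i \mid \phi\rangle_K ,
\]
so that
\[
\langle \phi \mid F_K(x)\phi\rangle_K \;=\; \sum_{i=1}^n \langle \phi \mid \xi_x^i\rangle_K \langle \xi_x^i \mid \phi\rangle_K \;=\; \sum_{i=1}^n \bigl|\langle \xi_x^i \mid \phi\rangle_K\bigr|^2 \;\geq\; 0 .
\]
This is a real number, so Proposition~\ref{SAD} reconfirms self-adjointness, and by Definition~\ref{Def1} the operator $F_K(x)$ is positive.

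Finally, for the rank I would let $M = \mathrm{right\text{-}span}\{\xi_x^1,\ldots,\xi_x^n\}\subset \hk$. By Proposition~\ref{P4} the vectors $\xi_x^1,\ldots,\xi_x^n$ are linearly independent, hence $\dim_{\mathbb H} M = n$. The formula above for $F_K(x)\phi$ shows $\mathrm{ran}\,F_K(x) \subseteq M$. For the reverse inclusion I restrict $F_K(x)$ to the finite-dimensional subspace $M$: if $\phi \in M$ satisfies $F_K(x)\phi = 0$, then $\langle \phi \mid F_K(x)\phi\rangle_K = 0$ forces $\langle \xi_x^i \mid \phi\rangle_K = 0$ for every $i$, i.e.\ $\phi$ is $\langle\cdot\mid\cdot\rangle_K$-orthogonal to a spanning set of $M$; combined with $\phi \in M$ this forces $\phi = 0$. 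Thus $F_K(x)\!\upharpoonright\! M$ is injective, and on the finite-dimensional space $M$ injectivity gives surjectivity, so $\mathrm{ran}\,F_K(x) = M$ and $F_K(x)$ has rank exactly $n$.

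The main obstacle is the rank claim, since both boundedness and positivity are essentially formal; the subtlety lies in showing that the range is not strictly smaller than $M$. The argument above avoids any appeal to a spectral theorem or to Gram matrix invertibility in the quaternionic setting by exploiting, instead, the positivity of $F_K(x)$ together with linear independence of the $\xi_x^i$ from Proposition~\ref{P4}.
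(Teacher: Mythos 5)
Your proof is correct. The paper itself gives no argument here beyond the remark that the proof is ``similar to that of Proposition \ref{P1}'', and your treatment of boundedness and positivity is exactly the transplant of that argument: the norm estimate via the triangle inequality and $\Vert\xi_x^i\Vert_K^2=\langle v_i\mid K(x,x)v_i\rangle_{\fk}<\infty$, and positivity via $\langle\phi\mid F_K(x)\phi\rangle_K=\sum_{i=1}^n\vert\langle\xi_x^i\mid\phi\rangle_K\vert^2\ge 0$, which mirrors the sum-of-squares computation in Proposition \ref{P1}. Where you go beyond the paper is the rank claim, which the cross-reference to Proposition \ref{P1} does not actually cover: your argument --- range contained in $M=\mathrm{right\text{-}span}\{\xi_x^1,\dots,\xi_x^n\}$, then injectivity of the restriction to $M$ deduced from positivity ($F_K(x)\phi=0$ forces $\langle\xi_x^i\mid\phi\rangle_K=0$ for all $i$, hence $\langle\phi\mid\phi\rangle_K=0$ for $\phi\in M$), then surjectivity on the finite-dimensional right $\mathbb{H}$-module $M$ --- is valid and uses only Proposition \ref{P4} and the sesquilinearity conventions of the right quaternionic inner product. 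This is a genuine improvement in completeness over the paper's one-line proof, since ``rank $n$'' is the one assertion that does not follow formally from the Proposition \ref{P1} template.
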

\begin{proof}
Similar to that of Proposition (\ref{P1}).
\end{proof}
Note that the range $\mbox{ran}(F_K(x)):=\fk_x$ is an $n$-dimensional subspace of $\hk$. We say that the positive operator-valued (POV) function  $x\mapsto F_K(x)$  is canonically associated to the reproducing kernel Hilbert space $\hk$.
\begin{proposition}\label{P6}
Let $\phi, \psi\in\hk$, then $\langle\phi|F_K(x)\psi\rangle_K=\langle\phi(x)|\psi(x)\rangle_\fk$.
\end{proposition}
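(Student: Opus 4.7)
The plan is to unfold the definition of $F_K(x)$, insert the reproducing property on both factors, and then recognize the resulting sum as a Parseval-type expansion relative to the orthonormal basis $\{v_i\}$ of $\fk$.

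First, I would compute $F_K(x)\psi$ directly from (\ref{FK}), obtaining
$$F_K(x)\psi = \sum_{i=1}^n \xi_x^i\,\langle \xi_x^i\mid\psi\rangle_K,$$
and then, using right linearity of $\langle\phi\mid\cdot\rangle_K$, conclude that
$$\langle \phi\mid F_K(x)\psi\rangle_K = \sum_{i=1}^n \langle \phi\mid\xi_x^i\rangle_K\,\langle \xi_x^i\mid\psi\rangle_K.$$

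Next I would apply the reproducing identity (\ref{eval-map}) with $u=v_i$ to each factor. Directly this gives $\langle \xi_x^{v_i}\mid\psi\rangle_K = \langle v_i\mid\psi(x)\rangle_\fk$, and after conjugating we also have $\langle \phi\mid\xi_x^{v_i}\rangle_K = \overline{\langle v_i\mid\phi(x)\rangle_\fk} = \langle \phi(x)\mid v_i\rangle_\fk$. Substituting,
$$\langle \phi\mid F_K(x)\psi\rangle_K = \sum_{i=1}^n \langle \phi(x)\mid v_i\rangle_\fk\,\langle v_i\mid\psi(x)\rangle_\fk.$$

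Finally, since $\{v_i\}_{i=1}^n$ is an orthonormal basis of the right quaternionic Hilbert space $\fk$, the Fourier expansion reads $\psi(x)=\sum_{i=1}^n v_i\,\langle v_i\mid\psi(x)\rangle_\fk$, and inserting this into $\langle\phi(x)\mid\psi(x)\rangle_\fk$ yields precisely the sum above. Hence $\langle\phi\mid F_K(x)\psi\rangle_K=\langle\phi(x)\mid\psi(x)\rangle_\fk$.

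The only delicate point is bookkeeping of quaternion multiplication order: the coefficients $\langle\xi_x^i\mid\psi\rangle_K$ must sit to the right of $\xi_x^i$ (consistent with right linearity in the second slot and right antilinearity in the first slot of $\langle\cdot\mid\cdot\rangle_K$), and the conjugation identity $\overline{\langle u\mid w\rangle_\fk}=\langle w\mid u\rangle_\fk$ must be applied after the scalar has been pulled out. No further estimates are needed because the sum over $i$ is finite.
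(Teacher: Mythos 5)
Your proof is correct and follows essentially the same route as the paper's: expand $F_K(x)$ over the coherent states $\xi_x^i$, apply the reproducing property (\ref{eval-map}) to both factors, and use $\sum_{i=1}^n|v_i\rangle\langle v_i|=I_{\fk}$ to resum. Your extra care with the right-linearity and conjugation order of the quaternionic scalars is a welcome addition, but the argument is the same.
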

\begin{proof}
From Proposition (\ref{P3}) we have $\langle \phi|\xi_x^i\rangle_K=\langle\phi(x)|v_i\rangle_\fk$ and $\langle \xi_x^i|\psi\rangle_K=\langle v_i|\psi(x)\rangle_\fk$. Further, since $\{v_i|~i=1,2,\cdots n\}$ is an orthonormal basis of $\fk$, $\sum_{i=1}^n|v_i\rangle\langle v_i|=I_{\fk},$ the identity operator on $\fk$. Therefore, we have
\begin{eqnarray*}
\langle\phi|F_K(x)\psi\rangle_K&=&\langle\phi|\sum_{i=1}^n|\xi_x^i\rangle
\langle\xi_x^i|\psi\rangle_K
=\sum_{i=1}^n\langle\phi|\xi_x^i\rangle_K\langle\xi_x^i|\psi\rangle_K\\
&=&\sum_{i=1}^n\langle\phi(x)|v_i\rangle_\fk\langle v_i|\psi(x)\rangle_\fk
=\langle\phi(x)|\psi(x)\rangle_\fk.
\end{eqnarray*}
\end{proof}
In view of the above proposition, and in analogy with physical convention for a similar
operator function on a reproducing kernel Hilbert space over the complexes,
the operator $F_K(x)$ may be called the {\em localization operator} at the point $x$.

For each $x\in X$, define the map
\begin{equation}\label{eval}
E_K(x):\hk\longrightarrow\fk_x=\fk\quad\mbox{by}\quad E_K(x)\phi=\phi(x).
\end{equation}
Clearly, this map is linear and will be called the {\em evaluation map} at the point $x\in X$.

\begin{proposition}\label{P8}
For each $x\in X$, the evaluation map $E_K(x)$ is bounded.
\end{proposition}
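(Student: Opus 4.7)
The plan is to show boundedness directly from the reproducing property, with explicit operator norm estimate $\Vert E_K(x)\Vert \leq \Vert K(x,x)\Vert^{1/2}$. In fact, the inequality (\ref{norms-ineq}) already derived in the proof of Proposition \ref{P2} is precisely what is needed; the only subtlety is that it was obtained for $\phi$ in the dense subspace $\mathcal H$, so the first step is to invoke the extension of the reproducing property (\ref{eval-map}) to all $\phi \in \hk$, which was noted at the start of the proof of Proposition \ref{P3}.

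First I would fix $\phi \in \hk$ and $x \in X$, and set $u = \phi (x) \in \fk$. By the reproducing property (\ref{eval-map}),
\[
  \Vert \phi (x) \Vert_\fk^2 = \langle \phi (x) \mid \phi (x)\rangle_\fk = \langle \xi_x^{\phi (x)} \mid \phi \rangle_K .
\]
Next I would apply the Cauchy–Schwarz inequality in $\hk$ (Lemma \ref{pre-schwarz} extended to $\hk$) to obtain
\[
  \Vert \phi (x) \Vert_\fk^2 \leq \Vert \xi_x^{\phi (x)}\Vert_K \cdot \Vert \phi \Vert_K .
\]
Then I would bound $\Vert \xi_x^{\phi (x)}\Vert_K$ using (\ref{k-norm}) and the boundedness of $K(x,x)$ from Proposition \ref{P1}:
\[
  \Vert \xi_x^{\phi (x)}\Vert_K^2 = \langle \phi (x) \mid K(x,x) \phi (x)\rangle_\fk \leq \Vert K(x,x)\Vert \cdot \Vert \phi (x)\Vert_\fk^2 .
\]

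Combining these two estimates, and dividing by $\Vert \phi (x)\Vert_\fk$ (the case $\phi (x) = 0$ being trivial), gives
\[
  \Vert E_K (x) \phi \Vert_\fk = \Vert \phi (x) \Vert_\fk \leq \Vert K(x,x)\Vert^{1/2} \, \Vert \phi \Vert_K ,
\]
so $E_K (x)$ is bounded with $\Vert E_K (x) \Vert \leq \Vert K(x,x)\Vert^{1/2}$. There is no real obstacle here; the only point that needs care is verifying that the reproducing property and the Cauchy–Schwarz inequality used in Lemma \ref{pre-schwarz} both pass from $\mathcal H$ to its completion $\hk$, which follows from continuity of the inner product in the $K$-norm together with the inequality (\ref{norms-ineq}) already established on $\mathcal H$.
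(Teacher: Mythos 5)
Your proof is correct and follows essentially the same route as the paper, which simply combines the definition (\ref{eval}) with the inequality (\ref{norms-ineq}) already established in the proof of Proposition \ref{P2}. You re-derive (\ref{norms-ineq}) in detail and explicitly note its extension from $\mathcal H$ to the completion $\hk$, which the paper leaves implicit, but the argument and the bound $\Vert E_K(x)\Vert \leq \Vert K(x,x)\Vert^{1/2}$ are the same.
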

\begin{proof}
The proof follows immediately by combining (\ref{eval}) with (\ref{norms-ineq}).
\end{proof}
\begin{proposition}\label{P10}
The vectors $\{f_i\}_{i=0}^N$ in (\ref{n}) satisfy $\langle f_i|f_j\rangle_K=\delta_{ij};~~i,j=0,1,\cdots N$.
\end{proposition}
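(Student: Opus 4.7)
The plan is to build an explicit surjective isometry $U$ from $\hk$ onto the right quaternionic Hilbert space $\ell^2_{\mathbb H}$ of square-summable sequences $(c_i)_{i=0}^N$ with $c_i \in \mathbb H$, sending the canonical basis vector $e_j$ (a $1$ in slot $j$, zeros elsewhere) to $f_j$. Since the $e_j$'s are manifestly orthonormal, the claim will then follow immediately. The delicate point is that a priori the $f_i$'s are only $\fk$-valued functions on $X$, not elements of the completion $\hk$; producing them as genuine vectors in $\hk$ is the main content of the argument.

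First I would define $U$ on the dense subspace $\mathcal H$ by
$$U(\xi_x^v) = \bigl(\langle f_i(x)\mid v\rangle_\fk\bigr)_{i=0}^N,$$
extended right-linearly. Using (\ref{pre-K}) together with (\ref{k-norm}), a direct computation yields
$$\langle U(\xi_x^u)\mid U(\xi_y^v)\rangle_{\ell^2} = \sum_i \langle u \mid f_i(x)\rangle_\fk\,\langle f_i(y)\mid v\rangle_\fk = \langle \xi_x^u\mid \xi_y^v\rangle_K,$$
so $U$ is isometric on $\mathcal H$ and extends by continuity to an isometry $\hk \to \ell^2_{\mathbb H}$.

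Next I would prove surjectivity by showing that the range of $U$ has trivial orthogonal complement. If $(c_i) \in \ell^2_{\mathbb H}$ satisfies $\langle (c_i) \mid U(\xi_x^v)\rangle_{\ell^2} = 0$ for all $x$ and $v$, then, rewriting $\overline{c_i}\langle f_i(x)\mid v\rangle_\fk = \langle f_i(x)c_i\mid v\rangle_\fk$ (property (v) of the scalar product), one gets $\langle \sum_i f_i(x)c_i \mid v\rangle_\fk = 0$ for every $v\in\fk$, hence $\sum_i f_i(x)c_i = 0$ for all $x$. Condition (b), i.e.\ (\ref{in-1}), then forces $c_i = 0$ for every $i$, so $U$ is a unitary isomorphism of right quaternionic Hilbert spaces.

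Finally I would identify $U^{-1}(e_j)$ with the function $f_j$. For $\phi = \xi_y^v$ the evaluation reads $\phi(x) = K(x,y)v = \sum_i f_i(x)\,(U(\phi))_i$. Both sides are bounded right-linear in $\phi$: the left by Proposition \ref{P8}, the right by Cauchy--Schwarz and condition (a), which gives the bound $\sqrt{\mathcal N(x)}\,\|(c_i)\|_{\ell^2}$. The identity therefore extends by continuity to every $\phi \in \hk$; taking $\phi = U^{-1}(e_j)$ yields $\phi(x) = f_j(x)$, so $f_j \in \hk$ with $U(f_j) = e_j$, and
$$\langle f_i\mid f_j\rangle_K = \langle U(f_i)\mid U(f_j)\rangle_{\ell^2} = \langle e_i\mid e_j\rangle_{\ell^2} = \delta_{ij}.$$
The main obstacle is precisely this last identification: the pointwise expansion of $\xi_y^v$ in the $f_i$'s must survive the completion, and it is the boundedness of $E_K(x)$ via (\ref{norms-ineq}) that upgrades the abstract Hilbert-space vector $U^{-1}(e_j)$ into the concrete function $f_j$.
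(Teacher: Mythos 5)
Your proof is correct, but it takes a genuinely different and noticeably more careful route than the paper's. The paper argues directly: it writes $\xi_x^u=\sum_i f_i\langle f_i(x)\mid u\rangle_\fk$, expands $\langle\xi_x^u\mid\xi_y^v\rangle_K$ formally as the double sum $\sum_{i,j}\langle u\mid f_i(x)\rangle_\fk\,\langle f_i\mid f_j\rangle_K\,\langle f_j(y)\mid v\rangle_\fk$, equates this with $\langle u\mid K(x,y)v\rangle_\fk=\sum_i\langle u\mid f_i(x)\rangle_\fk\langle f_i(y)\mid v\rangle_\fk$, and invokes the independence condition (\ref{in-1}) to match coefficients. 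That argument tacitly assumes exactly what you single out as the delicate point: that the $f_i$ already live in $\hk$ and that the sesquilinear form can be expanded term by term over the (possibly infinite) sums. Your construction of the isometry $U:\hk\to\ell^2_{\mathbb H}$ supplies the missing justifications --- the inner-product computation makes $U$ well defined on $\mathcal H$, condition (\ref{in-1}) gives surjectivity, and the boundedness of the evaluation maps (Proposition \ref{P8}, via (\ref{norms-ineq})) lets you recognize $U^{-1}(e_j)$ as the concrete function $f_j$, which is the step the paper skips entirely. It is also worth observing that your $\ell^2_{\mathbb H}$ is precisely the paper's abstract space $\h$ of (\ref{eta})--(\ref{uni}) with $\Phi_i=e_i$, and your $U$ is the inverse of the paper's unitary $W$: the paper uses Proposition \ref{P10} to legitimize $W$, whereas you build $W^{-1}$ first and read off Proposition \ref{P10} as a corollary. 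The cost is length; the gain is that the convergence and membership issues the paper glosses over are actually settled.
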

\begin{proof}
From (\ref{k-norm}) and (\ref{xi}) we have
$$\langle\sum_{i=0}^Nf_i\langle f_i(x)|u\rangle_\fk|\sum_{j=0}^Nf_j\langle f_j(y)|u\rangle_\fk\rangle_K=\langle u|K(x,y)v\rangle_\fk.$$
That is
\begin{eqnarray}\label{del}
\sum_{i=0}^N\sum_{j=0}^N\langle u|f_i(x)\rangle_\fk\langle f_i|f_j\rangle_K\langle f_j(y)|v\rangle_\fk&=&\langle u|K(x,y)v\rangle_\fk\\
&=&\sum_{i=0}^N\langle u|f_i(x)\rangle_\fk\langle f_j(y)|v\rangle_\fk\nonumber
\end{eqnarray}
Therefore, comparing (\ref{del}) with (\ref{in-1}) and (\ref{pre-K}) we get
$$\langle f_i|f_j\rangle_K=\delta_{ij};\quad i,j=0,1,\cdots N.$$
\end{proof}

\begin{remark}\label{Re1}
From the above proposition, the set of vectors $f_i, \;\; i=1,2,3, \ldots, N$, which were initially used  to define the reproducing kernel, becomes an orthonormal basis for the reproducing kernel Hilbert space $\hk$ so that $\dim(\hk)=N+1$, which could be infinite.
\end{remark}

Let $\h$ be an abstract right quaternionic Hilbert space of the same dimension as $\hk$. Let $\{\Phi_i\}_{i=0}^N$ be an orthonormal basis of $\h$. Now, with the same $f_i$ of (\ref{n}), define the vectors in $\h$ by
\begin{equation}\label{eta}
\eta_x^v=\sum_{i=0}^N\Phi_i\langle f_i(x)|v\rangle_\fk;\quad x\in X,~v\in\fk.
\end{equation}
With the orthonormal basis $\{v_i\}_{i=1}^n$ of $\fk$, define the set of QCS in $\h$
\be\mathfrak{G}_\h=
\{\eta_x^i:=\eta_x^{v_i}~|~x\in X, ~i=1,2,\cdots, n\}.
\label{QCS}
\en
Define
\begin{equation}\label{uni}
W:\hk\longrightarrow\h\quad\mbox{by}\quad Wf_i=\Phi_i;\quad i=0,1,2,\cdots N.
\end{equation}
Clearly $W$ is a unitary map and the $\Phi_i$ are simply the unitary images of $f_i$
under this map. Also,
\be
\eta^v_x = W\xi^v_x,
\label{QCS-map}
\en
\begin{proposition}\label{P11}
For $x,y\in X$ and $u,v\in\fk$, we have $\langle\eta_x^u|\eta_y^v\rangle_\h=\langle u|K(x,y)v\rangle_\fk$.
\end{proposition}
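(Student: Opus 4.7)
The cleanest route is to exploit the map $W$ introduced in \eqref{uni} together with the relation $\eta^v_x = W\xi^v_x$ in \eqref{QCS-map}. The plan is first to verify that $W$ really is a well-defined unitary operator from $\hk$ onto $\h$, and then to transfer the inner-product identity \eqref{k-norm} across $W$.

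For the unitarity step, I would appeal to Proposition \ref{P10}, which says that $\{f_i\}_{i=0}^N$ is an orthonormal basis of $\hk$. Since $\{\Phi_i\}_{i=0}^N$ was posited to be an orthonormal basis of $\h$ of matching cardinality, extending the prescription $Wf_i=\Phi_i$ right-linearly and then by continuity yields a bijective isometry. Once $W$ is known to be unitary, the computation reduces to
\begin{equation*}
\langle\eta_x^u\mid\eta_y^v\rangle_\h
=\langle W\xi_x^u\mid W\xi_y^v\rangle_\h
=\langle\xi_x^u\mid\xi_y^v\rangle_K
=\langle u\mid K(x,y)v\rangle_\fk,
\end{equation*}
where the middle equality uses the isometry of $W$ and the last uses \eqref{k-norm}.

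As an independent cross-check (and in case one prefers to avoid invoking Proposition \ref{P10}), I would also carry out a direct expansion: using sesquilinearity together with properties (iv)-(v) of the quaternionic inner product,
\begin{equation*}
\langle\eta_x^u\mid\eta_y^v\rangle_\h
=\Big\langle\sum_i\Phi_i\langle f_i(x)\mid u\rangle_\fk\,\Big|\,\sum_j\Phi_j\langle f_j(y)\mid v\rangle_\fk\Big\rangle_\h
=\sum_{i,j}\overline{\langle f_i(x)\mid u\rangle_\fk}\,\langle\Phi_i\mid\Phi_j\rangle_\h\,\langle f_j(y)\mid v\rangle_\fk,
\end{equation*}
which collapses via $\langle\Phi_i\mid\Phi_j\rangle_\h=\delta_{ij}$ to $\sum_i\langle u\mid f_i(x)\rangle_\fk\langle f_i(y)\mid v\rangle_\fk=\langle u\mid K(x,y)v\rangle_\fk$ by \eqref{pre-K}. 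Convergence of the series is guaranteed by condition (a) in \eqref{n} via the Cauchy-Schwarz inequality: the coefficient sequences $\{\langle f_i(x)\mid u\rangle\}_i$ are square-summable in $\mathbb H$.

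The one point that requires genuine care, and in my view is the main technical obstacle, is the correct placement of the quaternionic scalars when sesquilinearity is applied. In a \emph{right} quaternionic Hilbert space, scalars coming out of the first slot must be conjugated and placed on the left (property (v)), while scalars from the second slot stay on the right (property (iv)), and the factors must be kept in the correct non-commutative order throughout the double sum. Once this bookkeeping is respected, both arguments go through, and the two routes corroborate one another.
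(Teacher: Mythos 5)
Your proposal is correct, and your ``independent cross-check'' --- the direct expansion of $\langle\eta_x^u\mid\eta_y^v\rangle_\h$ using sesquilinearity, $\langle\Phi_i\mid\Phi_j\rangle_\h=\delta_{ij}$, and the definition (\ref{pre-K}) of $K$ --- is exactly the paper's proof, including the careful left/right placement of the conjugated quaternionic coefficients. Your primary route via the unitary $W$ and (\ref{k-norm}) is also legitimate (the paper asserts the unitarity of $W$ before Proposition \ref{P11}, and Proposition \ref{P10} supplies the orthonormality of the $f_i$ needed to justify it), so either argument stands.
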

\begin{proof}
Indeed,
\begin{eqnarray*}
\langle\eta_x^u|\eta_y^v\rangle_\h&=&\left\langle\sum_{i=0}^N\Phi_i\langle f_i(x)|u\rangle_\fk\mid \sum_{j=0}^N\Phi_j\langle f_j(y)|v\rangle_\fk\right\rangle_\h\\
&=&\sum_{i=0}^N\sum_{j=0}^N\overline{\langle f_i(x)|u\rangle_\fk}\langle\Phi_i|\Phi_j\rangle_\h \langle f_j(y)|v\rangle_\fk\\
&=&\sum_{i=0}^N\langle u|f_i(x)\rangle_\fk\langle f_i(y)|v\rangle_\fk
=\langle u|K(x,y)v\rangle_\fk.
\end{eqnarray*}
\end{proof}

In the physical literature, (generalized) coherent states (CS) are built on complex Hilbert
spaces, which are usually not reproducing kernel spaces, but isomorphic to them.
The construction outlined above is the most general technique for building QCS.
In most  physical situations CS are also required
to satisfy a resolution of the identity. To see this in the present context, suppose that $X$ is a
measure space with an appropriately chosen positive regular Borel measure $d\nu$,
the support of which is all of $X$ and such that
\begin{equation}\label{resol}
\sum_{i=1}^n\int_X|\eta_z^i\rangle\langle\langle\eta_z^i|d\nu(z)=I_\h,
\end{equation}
where $I_\h$ is the identity operator in $\h$. The above integral is assumed to converge weakly. This is the sense in which a family of QCS will be said to give a {\em resolution of the identity\/.}
\begin{proposition}\label{P12}
If the set of QCS $\mathfrak{G}_\h$ satisfies (\ref{resol}), then the kernel is square integrable in the sense that
$$\int_XK(x,z)K(z,y)d\nu(z)=K(x,y).$$
\end{proposition}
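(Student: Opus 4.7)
The plan is to reduce the proposed operator identity on $\fk$ to an equality of quaternionic scalars by testing it against arbitrary vectors $u,v\in\fk$, and then invoking Proposition \ref{P11} together with the assumed resolution of the identity on $\h$. In particular, the bridge from $\h$-inner products back to actions of $K(x,y)$ on $\fk$ is provided entirely by Proposition \ref{P11}, so the proof is really just a bookkeeping exercise.

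First, I would rewrite the target matrix element using Proposition \ref{P11} as $\langle u\mid K(x,y)v\rangle_\fk = \langle \eta_x^u\mid \eta_y^v\rangle_\h$, and then insert the resolution of identity $I_\h = \sum_{i=1}^n\int_X |\eta_z^i\rangle\langle\eta_z^i|\,d\nu(z)$ between the bra and the ket. Applying Proposition \ref{P11} once more on each of the two resulting pairings gives
$$\langle u\mid K(x,y)v\rangle_\fk \;=\; \sum_{i=1}^n\int_X \langle u\mid K(x,z)v_i\rangle_\fk\,\langle v_i\mid K(z,y)v\rangle_\fk\,d\nu(z).$$

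Next, I would collapse the finite sum over $i$ using completeness of the orthonormal basis $\{v_i\}_{i=1}^n$ in $\fk$. Writing $w(z):=K(z,y)v\in\fk$, the expansion $w(z)=\sum_{i=1}^n v_i\,\langle v_i\mid w(z)\rangle_\fk$ combined with the right $\mathbb H$-linearity of $K(x,z)$ yields $K(x,z)K(z,y)v = \sum_{i=1}^n K(x,z)v_i\cdot\langle v_i\mid K(z,y)v\rangle_\fk$, where the quaternionic scalar $\langle v_i\mid K(z,y)v\rangle_\fk$ correctly sits on the right of $K(x,z)v_i$. Pairing with $u$ and then moving the integral inside the $\fk$-inner product via the assumed weak convergence gives
$$\langle u\mid K(x,y)v\rangle_\fk \;=\; \int_X \langle u\mid K(x,z)K(z,y)v\rangle_\fk\,d\nu(z) \;=\; \left\langle u \,\Big|\, \left(\int_X K(x,z)K(z,y)\,d\nu(z)\right)v\right\rangle_\fk.$$
Since $u,v\in\fk$ are arbitrary, the stated operator identity follows.

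The only genuinely delicate step is the collapse of the sum over the basis of $\fk$: because $\mathbb H$ is non-commutative, one must verify that the scalar produced by $\langle v_i\mid \cdot\rangle_\fk$ ends up on the right of $v_i$ (and hence of $K(x,z)v_i$), so that the standard completeness relation $\sum_{i=1}^n|v_i\rangle\langle v_i|=I_\fk$ is applicable in the intended order. Everything else is a mechanical application of Proposition \ref{P11} and the weak interpretation of the integrals in (\ref{resol}).
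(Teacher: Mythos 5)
Your proposal is correct and follows essentially the same route as the paper's own proof: test the identity against $u,v\in\fk$, insert the resolution of the identity, translate both pairings via Proposition \ref{P11}, and collapse the sum over the basis $\{v_i\}$ of $\fk$. Your extra care in checking that the quaternionic scalar $\langle v_i\mid K(z,y)v\rangle_\fk$ lands on the right of $K(x,z)v_i$ (so that right-linearity and $\sum_i |v_i\rangle\langle v_i| = I_\fk$ apply in the correct order) is a point the paper glosses over, but the argument is the same.
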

\begin{proof}
Suppose (\ref{resol}) holds. Then, for $u,v\in\fk$,
\begin{eqnarray*}
\sum_{i=1}^n\int_X\langle\eta_x^u|\eta_z^i\rangle_\h\langle\eta_z^i|\eta_y^v\rangle_\h d\nu(z)
=\langle\eta_x^u|\eta_y^v\rangle_\h.
\end{eqnarray*}
That is
$$\sum_{i=1}^n\int_X\langle u|K(x,z)v_i\rangle_\fk\langle v_i|K(z,y)v\rangle_\fk d\nu(z)
=\langle u|K(x,y)v\rangle_\fk.$$
Hence
$$\int_X K(x,z)\sum_{i=1}^n|v_i\rangle\langle v_i|K(z,y)d\nu(z)=K(x,y).$$
Therefore
$$\int_X K(x,z)K(z,y)d\nu(z)=K(x,y).$$
\end{proof}

%%%%%%%%%%%%%%%%%%%%%%%%%%%%%%%%%%%%%%%%%%%%%%%%%%%%%%%%%%%
\subsection{An associated positive operator valued (POV) measure}\label{subsec-POV-fcn}
On $\h$ we again have a POV-function $F:X\longrightarrow \mathcal L(\h)$, with
\begin{equation}\label{F}
F(x)=\sum_{i=1}^n|\eta_x^i\rangle\langle\eta_x^i| = W F_K (x) W^{-1}\; ,
\end{equation}
with $W$ as in (\ref{uni}).
Let $\B(X)$ be the Borel subsets of $X$. Suppose that the resolution of the identity
(\ref{resol}) holds. Let us construct a positive operator-valued set function on $\mathfrak H$  as follows.
Define the set function $a: \B(X)\longrightarrow \mathcal L(\h)$
\begin{equation}\label{povm}
a(\Delta)=\int_{\Delta}F(x)d\nu(x), \qquad \Delta \in \B(X).
\end{equation}
Clearly $a(\emptyset)=0$. From the resolution of the identity we also have $a(X)=I_\h$,
i.e.,  $a$  is  normalized.
\begin{proposition}\label{P13}
The set function $\Delta \longmapsto a(\Delta)$ is $\sigma$-additive in the weak sense. That is, if $\Delta_i\cap\Delta_j=\emptyset;~~i\not=j$, $i,j\in\N$, then
\begin{equation}\label{sigma}
a\left(\bigcup_{k=1}^{\infty}\Delta_k\right)=\sum_{k=1}^{\infty}a(\Delta_k)
\end{equation}
and the sum converges weakly.
\end{proposition}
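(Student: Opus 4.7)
The plan is to reduce the weak $\sigma$-additivity of $a$ to the classical countable additivity of the Lebesgue integral applied to the non-negative scalar functions $f_\phi(x) := \langle\phi\mid F(x)\phi\rangle_\h$, and then to lift from diagonal quadratic forms to general matrix elements by polarization.

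First, for each $x\in X$, the operator $F(x)=WF_K(x)W^{-1}$ is positive and bounded (Proposition \ref{P5} transported through the unitary $W$ of (\ref{uni})), so for any $\phi\in\h$,
$$
f_\phi(x) \;=\; \langle\phi\mid F(x)\phi\rangle_\h \;=\; \sum_{i=1}^n \vert\langle\eta_x^i\mid\phi\rangle_\h\vert^2
$$
is non-negative, real-valued and $\nu$-measurable. Setting $\psi=\phi$ in (\ref{resol}) yields $\int_X f_\phi\,d\nu = \|\phi\|_\h^2 < \infty$, so $f_\phi\in L^1(X,d\nu)$. Hence, given pairwise disjoint $\Delta_1,\Delta_2,\ldots\in\B(X)$ with $\Delta=\bigcup_k\Delta_k$, classical countable additivity of the positive Lebesgue integral yields
$$
\langle\phi\mid a(\Delta)\phi\rangle_\h \;=\; \int_\Delta f_\phi\,d\nu \;=\; \sum_{k=1}^\infty \int_{\Delta_k} f_\phi\,d\nu \;=\; \sum_{k=1}^\infty \langle\phi\mid a(\Delta_k)\phi\rangle_\h,
$$
the partial sums being monotonically increasing and bounded above by $\|\phi\|_\h^2$. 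This establishes weak $\sigma$-additivity on the diagonal.

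To pass from diagonal forms to general matrix elements $\langle\phi\mid a(\Delta)\psi\rangle_\h$, I would first note that $a(\Delta)$ is a bounded positive self-adjoint operator: positivity holds since $\langle\phi\mid a(\Delta)\phi\rangle_\h = \int_\Delta f_\phi\,d\nu \geq 0$; the bound $\|a(\Delta)\|\leq 1$ follows via Lemma \ref{L2} from $\int_\Delta f_\phi\,d\nu \leq \|\phi\|_\h^2$; self-adjointness follows from Proposition \ref{SAD}. Taking $\sqrt{a(\Delta)}$ via Theorem \ref{IT1}, I would then apply the quaternionic polarization identity (Theorem 3.1) to $\langle\sqrt{a(\Delta)}\phi\mid\sqrt{a(\Delta)}\psi\rangle_\h = \langle\phi\mid a(\Delta)\psi\rangle_\h$, exactly the manoeuvre used inside the proof of Theorem \ref{domi}. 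This expresses each matrix element as a $\quat$-linear combination of eight diagonal quantities $\langle\chi\mid a(\Delta)\chi\rangle_\h$ with $\chi\in\{\phi\alpha\pm\psi : \alpha=1,i,j,k\}$, each of which is weakly $\sigma$-additive by the previous step; additivity of the combination follows.

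The only subtlety, rather than a genuine obstacle, is the need for the four-term quaternionic polarization in place of the two-term complex version, forced by non-commutativity of the scalars; no new convergence issue arises because each of the eight diagonal sums is already monotonically controlled by $\|\phi\alpha\pm\psi\|_\h^2$.
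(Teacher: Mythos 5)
Your proof is correct, but it follows a genuinely different route from the paper's. The paper defines the partial sums $S_m=\sum_{k=1}^m a(\Delta_k)$, observes that they form a monotone sequence of positive self-adjoint operators bounded above by $I_\h$, and invokes the operator-level monotone convergence result (Theorem \ref{domi}) to get convergence of $S_m$; the identification of the limit with $a\bigl(\bigcup_k\Delta_k\bigr)$ is then asserted from disjointness and positivity of $F(x)$. You instead work at the scalar level: the diagonal identity $\langle\phi\mid a(\Delta)\phi\rangle_\h=\int_\Delta\langle\phi\mid F(x)\phi\rangle_\h\,d\nu(x)$ reduces $\sigma$-additivity on the diagonal to countable additivity of the Lebesgue integral of the non-negative integrable function $f_\phi$, and you then recover general matrix elements by the quaternionic polarization identity applied to $\langle\sqrt{a(\Delta)}\phi\mid\sqrt{a(\Delta)}\psi\rangle_\h$. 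The two arguments are close in spirit --- indeed the polarization step you perform is exactly the device used inside the proof of Theorem \ref{domi} itself, so you have in effect inlined the relevant portion of that theorem. What your version buys is that the final identification of the limit, which the paper leaves somewhat implicit, is made explicit and rigorous via the countable additivity of the integral; it also proves exactly the weak convergence that is claimed without appealing to the stronger machinery. What the paper's version buys is brevity on the page and, as a by-product of Theorem \ref{domi}, the stronger conclusion that the partial sums $S_m$ actually converge strongly, not merely weakly. Both are valid; no gap in yours.
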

\begin{proof}
 Let
$$S_m=\sum_{k=1}^ma(\Delta_k).$$
Since $F(x)$ is a positive self-adjoint operator, $\{S_m\}_{m=1}^{\infty}$ is a sequence of positive self-adjoint operators and
$$S_m\leq S_{m+1}\leq I_\h; \quad m\in\N.$$
Therefore, by Theorem \ref{domi}, $\{S_m\}_{m=1}^{\infty}$ converges weakly. That is
$$\lim_{m\rightarrow\infty}S_m=\sum_{k=1}^{\infty}a(\Delta_k)$$
converges weakly, which means, for $\Phi, \Psi\in\h$, the quaternionic sum $\sum_{k=1}^{\infty}\langle\Phi|a(\Delta_k)\Psi\rangle_\h$ converges. Therefore, since the elements of the sequence $\{\Delta_k\}_{k=1}^{\infty}$ are pairwise disjoint and $F(x)$ is positive, we have (\ref{sigma}).
\end{proof}

 In view of the above, we call the set function $\Delta \longmapsto a(\Delta)$ a {\em positive operator valued (POV) measure\/.}

\begin{definition}
\begin{enumerate}
\item[(i)] A POV-measure $a$ is said to be regular if for each $\Phi\in\h$, the real measure
$$\mu_{\Phi}(\Delta)=\langle\Phi|a(\Delta)\Phi\rangle$$
is regular.
\item[(ii)]$a$ is said to be bounded if $a(X)=A$ is a bounded operator. This means that the positive Borel measure $\mu_{\Phi}$ is bounded.
\item[(iii)]Let $\mu$ be a positive Borel measure on $X$. $a$ is said to be smooth with respect to the measure $\mu$ if $a(\Delta)=0$ if and only if $\mu(\Delta)=0$ for any $\Delta\in \B(X)$.
\end{enumerate}
\end{definition}
\begin{proposition}\label{P14}Suppose the measure $\nu$ in (\ref{povm}) is regular, then
\begin{enumerate}
\item[(i)]the POV-measure $a$ is regular.
\item[(ii)] $a$ is smooth with respect to $\nu$.
\end{enumerate}
\end{proposition}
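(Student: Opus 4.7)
The plan is to reduce both claims about the operator-valued set function $a$ to scalar measure-theoretic statements about the real-valued measures
\[
\mu_\Phi(\Delta) := \langle\Phi|a(\Delta)\Phi\rangle_\h = \int_\Delta g_\Phi(x)\, d\nu(x), \qquad g_\Phi(x) := \langle\Phi|F(x)\Phi\rangle_\h = \sum_{i=1}^n |\langle\Phi|\eta_x^i\rangle_\h|^2 \geq 0,
\]
each of which has a non-negative density $g_\Phi$ with respect to $\nu$. The resolution of the identity (\ref{resol}) gives $\mu_\Phi(X) = \|\Phi\|_\h^2 < \infty$, so $g_\Phi \in L^1(\nu)$ and $\mu_\Phi$ is a finite positive Borel measure that is absolutely continuous with respect to $\nu$.

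For (i), I would invoke the standard measure-theoretic result that on a locally compact Hausdorff space, a finite measure of the form $g \cdot \nu$, with $g \in L^1(\nu)$ non-negative and $\nu$ regular, is itself regular. Concretely, given a Borel set $\Delta$ and $\epsilon > 0$, one uses $L^1$-approximation of $g_\Phi$ by bounded non-negative Borel functions, together with the outer/inner regularity of $\nu$, to produce the required open super-set and compact sub-set of $\Delta$ witnessing regularity of $\mu_\Phi$. Since this holds for every $\Phi \in \h$, this is exactly the regularity of $a$ in the sense of the definition.

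For (ii), the implication $\nu(\Delta) = 0 \Rightarrow a(\Delta) = 0$ is immediate, as every $\mu_\Phi$ vanishes on $\nu$-null sets. For the converse, suppose $a(\Delta) = 0$. Then $\int_\Delta g_\Phi\, d\nu = 0$ for every $\Phi \in \h$, and the non-negativity of $g_\Phi$ forces $g_\Phi = 0$ for $\nu$-a.e.\ $x \in \Delta$, with an exceptional set that a priori depends on $\Phi$. To obtain a single $\nu$-null exceptional set, I would invoke the standing separability of $\h$, fix a countable dense subset $\{\Phi_m\}_{m \geq 1} \subset \h$, and let $N_m \subset \Delta$ be the $\nu$-null set off which $g_{\Phi_m}$ vanishes. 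Setting $N = \bigcup_m N_m$ one still has $\nu(N)=0$, and for every $x \in \Delta \setminus N$ one has $\langle\Phi_m|F(x)\Phi_m\rangle_\h = 0$ for all $m$. Since $F(x)$ is bounded, the map $\Phi \mapsto \langle\Phi|F(x)\Phi\rangle_\h$ is continuous, so by density this quadratic form vanishes for every $\Phi$, and Lemma \ref{L2} applied to the bounded self-adjoint operator $F(x)$ yields $\|F(x)\| = 0$, i.e., $F(x) = 0$. But by Proposition \ref{P5} together with the unitary identification $F(x) = W F_K(x) W^{-1}$ of (\ref{F}), $F(x)$ has rank $n \geq 1$ and is therefore non-zero at every $x \in X$. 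This forces $\Delta \setminus N = \emptyset$, so $\nu(\Delta) \leq \nu(N) = 0$.

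The main obstacle is the converse of smoothness: one must upgrade a $\Phi$-by-$\Phi$ almost-everywhere vanishing statement into a simultaneous vanishing of $F$ itself on $\Delta$. The two essential ingredients that make this possible are the separability of $\h$ (which lets a countable dense set replace the whole Hilbert space, and merges countably many $\nu$-null sets into one) and Lemma \ref{L2} (which converts the vanishing of the quadratic form on all unit vectors into an operator-norm statement, at which point Proposition \ref{P5} provides the contradiction). Everything else is a routine transfer of the regularity and $\sigma$-additivity of $\nu$ through the non-negative density $g_\Phi$.
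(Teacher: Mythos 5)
Your proof is correct, and it is considerably more complete than the argument the paper actually gives: the paper's proof of Proposition \ref{P14} consists only of recalling inner regularity of $\nu$ and the positivity and boundedness of $F(x)$, and then asserting that (i) and (ii) ``follow immediately.'' Your reduction to the scalar measures $\mu_\Phi = g_\Phi\,\nu$ with $g_\Phi(x)=\langle\Phi|F(x)\Phi\rangle\ge 0$ is exactly the intended mechanism, and your treatment of the only genuinely non-trivial point --- the implication $a(\Delta)=0\Rightarrow\nu(\Delta)=0$ --- via separability of $\h$, continuity of the quadratic form of the bounded operator $F(x)$, Lemma \ref{L2}, and the fact that $F(x)$ has rank $n\ge 1$ at every $x$, is sound and supplies a step the paper never addresses. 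One remark: this step can be shortened by using Proposition \ref{P15}, which gives $\langle\Phi|a(\Delta)\Phi\rangle=\int_\Delta\|\phi(x)\|^2_\fk\,d\nu(x)$ with $\phi=W^{-1}\Phi$; testing against the countably many basis vectors $\Phi_i=Wf_i$ and summing yields $\int_\Delta\mathcal N(x)\,d\nu(x)=0$, and since condition (c) forces $\mathcal N(x)>0$ for every $x$, one gets $\nu(\Delta)=0$ directly, without invoking a dense subset of $\h$ or the operator-norm lemma. Either way, your write-up is a valid (and in fact the only fully argued) proof of the statement.
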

\begin{proof}Let $C(\Delta)=\{C\subseteq\Delta~|~C~\mbox{is compact}\}.$
Since the measure $\nu$ is regular it satisfies
$$\nu(\Delta)=\sup_{B\in C(\Delta)}\nu(B).$$
In addition, the operator $F(x)$ is positive and bounded, therefore (i), (ii) follows immediately.
\end{proof}
\begin{proposition}\label{P15}
Let $\Phi\in\h$ and $\phi = W^{-1}\Phi \in \hk$.  Then, writing $a_K (\Delta) = W^{-1} a(\Delta ) W$
\begin{equation}\label{density}
\langle\Phi|a(\Delta)\Phi\rangle_\h= \langle \phi \mid a_K(\Delta)\phi\rangle_{\h_K} = \int_{\Delta}\|\phi(x)\|_\fk^2\; d\nu(x).
\end{equation}
\end{proposition}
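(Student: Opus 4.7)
The plan is to chain together three things already established: the unitarity of $W$ from (\ref{uni}), the intertwining relation $F(x) = W F_K(x) W^{-1}$ from (\ref{F}), and Proposition \ref{P6}, which identifies $\langle \phi \mid F_K(x) \phi \rangle_K$ with $\|\phi (x)\|_\fk^2$.

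First I would dispatch the leftmost equality. Since $W : \hk \to \h$ is unitary and $\Phi = W\phi$, the definition $a_K (\Delta ) = W^{-1} a(\Delta) W$ gives
\begin{equation*}
\langle \Phi \mid a(\Delta )\Phi\rangle_\h = \langle W\phi \mid a(\Delta ) W\phi \rangle_\h = \langle \phi \mid W^{-1} a(\Delta) W \phi \rangle_K = \langle \phi \mid a_K (\Delta )\phi \rangle_K .
\end{equation*}
This is purely algebraic and uses nothing beyond the unitarity of $W$.

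Next, for the second equality I would push $W^{-1}(\cdot )W$ inside the weak integral defining $a(\Delta)$. From (\ref{povm}) and (\ref{F}),
\begin{equation*}
 a_K (\Delta ) = W^{-1}\!\!\int_\Delta F(x)\,d\nu (x)\, W = \int_\Delta W^{-1} F(x) W\, d\nu (x) = \int_\Delta F_K (x)\,d\nu (x),
\end{equation*}
the interchange being justified by the weak sense in which (\ref{povm}) converges (conjugation by the unitary $W$ preserves the weak integral). Taking the matrix element in the state $\phi$ then yields
\begin{equation*}
\langle \phi \mid a_K (\Delta )\phi \rangle_K = \int_\Delta \langle \phi \mid F_K (x) \phi \rangle_K\, d\nu (x),
\end{equation*}
and Proposition \ref{P6} applied with $\psi = \phi$ turns the integrand into $\langle \phi (x) \mid \phi (x)\rangle_\fk = \|\phi (x)\|_\fk^2$, giving the claimed formula.

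The only genuinely non-trivial point is the legitimacy of moving $W^{-1}$ and $W$ inside the weak operator integral in the quaternionic setting; this is a matter of writing out matrix elements $\langle \phi \mid a_K (\Delta)\phi\rangle_K = \langle W\phi \mid a(\Delta) W\phi\rangle_\h$ and applying the weak-integral definition to the right-hand side, which reduces to the scalar identity $\langle W\phi \mid F(x) W\phi\rangle_\h = \langle \phi \mid F_K(x)\phi\rangle_K$ pointwise in $x$ before integrating. Apart from that bookkeeping, no new estimates are needed: everything reduces to earlier propositions. An alternative, equivalent route would bypass $F_K$ entirely by expanding $F(x) = \sum_{i=1}^n |\eta_x^i\rangle\langle \eta_x^i|$, using (\ref{QCS-map}) and the reproducing property (\ref{eval-map}) to write $\langle \eta_x^i \mid \Phi \rangle_\h = \langle v_i \mid \phi (x)\rangle_\fk$, and then summing via Parseval in $\fk$ for the orthonormal basis $\{v_i\}$; this gives $\sum_{i=1}^n |\langle \eta_x^i \mid \Phi\rangle_\h|^2 = \|\phi (x)\|_\fk^2$ and the integral formula follows.
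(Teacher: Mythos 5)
Your proof is correct and follows exactly the route the paper intends: the paper's own "proof" is the single remark that the result is clear from the isometry of $W$, and your argument simply spells this out by combining the unitarity of $W$, the weak-integral definition of $a(\Delta)$, and Proposition \ref{P6}. No gaps; the level of detail you supply (including the justification for conjugating the weak integral by $W$) is exactly what the paper leaves implicit.
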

The proof is clear from the isometry of $W$ (see (\ref{uni}) -- (\ref{QCS-map})).

\medskip

If we now identify $\Vert \phi (x)\Vert^2$ with a probability density, we may call $a(\Delta)$ an {\em operator of localization} in the set $\Delta$. Additionally, since the measure is normalized, from (\ref{density}) it follows that
$$ \Vert \Phi\Vert^2 = \int_X \Vert \phi (x)\Vert^2_{\mathfrak K}\;d\nu (x)= \Vert\phi\Vert^2_K . $$
In other words, the norm defined on $\h_K$ through Proposition  \ref{P2} becomes an $L^2$-norm, with the scalar product given by,
\be\label{l2-sc-prod}
  \langle\phi\mid\psi\rangle_K = \int_X\langle \phi (x)\mid \psi (x)\rangle_{\mathfrak K} \; d\nu (x).
\en

Let us denote by $L^2_{\mathfrak K} (X, d\nu)$ the right quaternionic Hilbert space of functions $f : X \longrightarrow \mathfrak K$, with scalar product defined as in (\ref{l2-sc-prod}). Then the reproducing kernel Hilbert space $\h_K$ becomes a subspace of this space. We denote by $\mathbb P_K$ the projection operator from $L^2_{\mathfrak K} (X, d\nu)$ to $\h_K$. We then have,
\begin{proposition}\label{int-ker-prop}
The reproducing kernel $K(x,y)$ is the integral kernel for the operator $\mathbb P_K$, i.e., for any $F \in L^2_{\mathfrak K} (X, d\nu)$,
\be
(\mathbb P_K F)(x) = \int_X K (x, y) F(y) \; d\nu (y).
\label{int-ker-eq}
\en
\end{proposition}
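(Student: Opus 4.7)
The plan is to test an arbitrary $F \in L^2_{\mathfrak K}(X, d\nu)$ against the coherent states $\xi_x^u = K(\cdot , x)u$, which by construction lie in $\h_K$ for every $x\in X$ and $u\in\fk$, and then exploit the reproducing property (\ref{eval-map}). Since Proposition \ref{P15} (together with polarization) shows that the inclusion $\h_K \hookrightarrow L^2_{\mathfrak K}(X, d\nu)$ is an isometry of scalar products, and since $(I - \mathbb P_K)F$ is $L^2$-orthogonal to $\h_K$, pairing $\xi_x^u$ with $F$ in the $L^2$-inner product is the same as pairing $\xi_x^u$ with $\mathbb P_K F$ in the $\h_K$-inner product. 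Applying (\ref{eval-map}) to the latter gives
\begin{equation*}
  \int_X \langle \xi_x^u(y) \mid F(y) \rangle_{\fk}\, d\nu(y) \;=\; \langle \xi_x^u \mid \mathbb P_K F \rangle_K \;=\; \langle u \mid (\mathbb P_K F)(x)\rangle_{\fk}.
\end{equation*}

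Next I would rewrite the integrand on the left using $\xi_x^u(y) = K(y,x)u$ and the identity $K(y,x)^{*}=K(x,y)$ recorded after Proposition \ref{P1}. The sesquilinearity of the $\fk$-inner product converts the integrand into $\langle u \mid K(x,y)F(y)\rangle_{\fk}$, so that
\begin{equation*}
  \langle u \mid (\mathbb P_K F)(x)\rangle_{\fk} \;=\; \int_X \langle u \mid K(x,y)F(y)\rangle_{\fk}\, d\nu(y) \;=\; \Bigl\langle u \,\Big|\, \int_X K(x,y)F(y)\, d\nu(y)\Bigr\rangle_{\fk},
\end{equation*}
where the last equality defines the $\fk$-valued integral weakly via the quaternionic Riesz theorem (equivalently, componentwise, since $\dim\fk<\infty$). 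Letting $u$ range over an orthonormal basis of $\fk$ and using non-degeneracy of $\langle\cdot\mid\cdot\rangle_{\fk}$ yields (\ref{int-ker-eq}).

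I expect the only delicate points to be the usual ones in the right quaternionic setting: convergence of the scalar integrals, and the correct side-placement of quaternions in sesquilinear manipulations. Convergence follows from the quaternionic Cauchy--Schwarz inequality applied to $\xi_x^u \in \h_K \subset L^2_{\mathfrak K}(X,d\nu)$ and to $F \in L^2_{\mathfrak K}(X,d\nu)$, which puts $y \mapsto \langle K(y,x)u \mid F(y)\rangle_{\fk}$ in $L^1(X,d\nu)$. The algebraic bookkeeping is identical in spirit to that carried out in Proposition \ref{P11}, and demands only the conventions $\langle\phi q\mid\psi\rangle=\overline{q}\langle\phi\mid\psi\rangle$ and $\langle\phi\mid\psi q\rangle=\langle\phi\mid\psi\rangle q$ when passing quaternions through the inner product. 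No nontrivial obstacle beyond these is anticipated.
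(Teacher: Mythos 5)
Your argument is correct, and it reaches the identity by a genuinely different route from the paper's. The paper disposes of this proposition in one line: since the $f_i$ form an orthonormal basis of $\h_K$ viewed as a subspace of $L^2_{\mathfrak K}(X,d\nu)$, one writes $\mathbb P_K=\sum_{i}\vert f_i\rangle\langle f_i\vert$, so that $(\mathbb P_K F)(x)=\sum_i f_i(x)\int_X\langle f_i(y)\mid F(y)\rangle_{\fk}\,d\nu(y)$, and interchanging the sum with the integral reassembles the kernel $K(x,y)=\sum_i\vert f_i(x)\rangle\langle f_i(y)\vert$ of (\ref{pre-K}). You instead work by duality: you pair $F$ against the coherent states $\xi^u_x=K(\cdot,x)u$, use the orthogonality of $(I-\mathbb P_K)F$ to $\h_K$ together with the identification (\ref{l2-sc-prod}) of $\langle\cdot\mid\cdot\rangle_K$ with the $L^2$ scalar product on $\h_K$, invoke the reproducing property (\ref{eval-map}), and then transfer the kernel across the $\fk$-inner product via $K(y,x)^{*}=K(x,y)$. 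The trade-off is instructive: the paper's route is shorter but silently requires justifying the interchange of the infinite sum over $i$ with the integral over $X$; your route avoids that interchange entirely, replacing it with a single application of Cauchy--Schwarz for the $L^1$ convergence and a weak (componentwise, since $\dim\fk<\infty$) definition of the vector-valued integral, at the cost of needing the adjoint identity for the kernel and the non-degeneracy argument at the end. Both are sound under the standing assumption of Section 4.2 that the resolution of the identity (\ref{resol}) holds, which is what makes $\h_K$ a subspace of $L^2_{\mathfrak K}(X,d\nu)$ in the first place.
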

The proof follows very simply from the fact that the kernel can be written in terms of the orthonormal basis vectors $f_i$ of the subspace $\h_K \subset L^2_{\mathfrak K} (X, d\nu)$, as in (\ref{pre-K}).

%%%%%%%%%%%%%%%%%%%%%%%%%%%%%%%%%%
\subsection{A Naimark type of an extension theorem}\label{subsec-naimark}
There is a well-known theorem, due to Naimark \cite{naimark}, on a complex Hilbert space, which says that any normalized positive operator-valued measure can be lifted to a projection-valued measure in an enlarged Hilbert space, in a certain minimal fashion. We now show how a similar extension is possible for the normalized POV-measure $\Delta \longmapsto a_K (\Delta) = W^{-1} a(\Delta ) W$.

\begin{definition}\label{D-pro}
A POV-measure $a=P$ is said to be a {\em projection valued measure}  (PV-measure) if $P(\Delta)$ is a projection operator for each $\Delta\in\B(X)$. That is, it satisfies $P(\Delta)=P(\Delta)^{\dagger}=P(\Delta)^2.$
\end{definition}

Let us now define a projection-valued measure $P(\Delta),\; \Delta \in \mathcal B(X)$ on $L^2_{\mathfrak K} (X, d\nu)$ as follows:
\begin{equation}\label{pv-m}
(P(\Delta)F)(x)=\chi_{\Delta}F(x),
\end{equation}
where $\chi_{\Delta}$ is the characteristic function of the set $\Delta$. That this defines a normalized PV-measure is easy to check. We then have the following version of the Naimark extension theorem.

\begin{theorem}\label{Naimark}
The normalized PV-measure $\Delta \longmapsto P(\Delta)$ extends the POV-measure $\Delta \longmapsto a_K(\Delta)$ in the sense of Naimark, i.e.,
\be
  a_K(\Delta) = \mathbb P_K P(\Delta)\mathbb P_K, \quad \Delta \in \mathcal B (X).
\label{ext}
\en
This extension is minimal in the sense that the set of vectors,
$$\mathcal{S}=\{P(\Delta)\phi \mid  \Delta\in \B(X), \phi \in \h_K\}$$
is dense in $L^2_{\mathfrak K} (X, d\nu)$.
\end{theorem}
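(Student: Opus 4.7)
The plan is to verify (\ref{ext}) by matching matrix elements on both sides, and to establish minimality by showing that the orthogonal complement of $\mathcal S$ in $L^2_{\mathfrak K}(X, d\nu)$ is trivial.

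For the extension equation, I would fix $\phi, \psi \in \h_K$ and compare $\langle \phi \mid a_K(\Delta)\psi\rangle_K$ with $\langle \phi \mid \mathbb P_K P(\Delta) \mathbb P_K \psi\rangle_K$. Proposition \ref{P15} already supplies the diagonal identity $\langle\phi\mid a_K(\Delta)\phi\rangle_K = \int_\Delta \|\phi(x)\|^2_{\mathfrak K}\, d\nu(x)$; applying the quaternionic polarization identity recorded earlier in the paper term-by-term, and pulling the (real-linear) integral through, promotes this to the sesquilinear form
$$
  \langle\phi\mid a_K(\Delta)\psi\rangle_K = \int_\Delta \langle \phi(x) \mid \psi(x)\rangle_{\mathfrak K}\, d\nu(x).
$$
On the other side, $\phi, \psi \in \h_K$ satisfy $\mathbb P_K\phi = \phi$ and $\mathbb P_K\psi = \psi$; together with $P(\Delta)\psi = \chi_\Delta\psi$ from (\ref{pv-m}) and the $L^2$ scalar product (\ref{l2-sc-prod}), the right-hand side evaluates to the same integral. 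Since $\phi, \psi$ are arbitrary in $\h_K$, (\ref{ext}) follows.

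For minimality, I would show that any $F \in L^2_{\mathfrak K}(X, d\nu)$ orthogonal to all of $\mathcal S$ must be zero. Taking $\phi = f_i$, the orthonormal basis of $\h_K$ furnished by Proposition \ref{P10}, and using $P(\Delta) f_i = \chi_\Delta f_i$, the orthogonality condition becomes
$$
  \int_\Delta \langle f_i(x) \mid F(x)\rangle_{\mathfrak K}\, d\nu(x) = 0, \quad \forall \Delta \in \mathcal B(X),\; \forall i.
$$
Decomposing the quaternion-valued integrand into its four real components, each real signed measure on $X$ vanishes on every Borel set, whence $\langle f_i(x) \mid F(x)\rangle_{\mathfrak K} = 0$ for $\nu$-a.e. $x$, individually for each $i$. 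Since the index set is countable, there is a common $\nu$-conull set on which this holds for every $i$ simultaneously; condition (c) in (\ref{fk}) then forces $F(x) = 0$ $\nu$-a.e., i.e., $F = 0$ in $L^2_{\mathfrak K}(X, d\nu)$.

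The main obstacle will be the quaternionic bookkeeping in the polarization step: non-commutativity means one expands in the four imaginary directions $1, i, j, k$ rather than the two signs used in the complex case, and the right-multiplication conventions on both $\h_K$ and $\mathfrak K$ must be tracked consistently so that the integral, a limit of $\mathbb R$-linear combinations of quaternionic values, correctly reproduces the quaternionic sesquilinearity. Once that bookkeeping is carried out, both parts of the theorem reduce to close analogues of the classical Naimark extension argument.
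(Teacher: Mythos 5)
Your proposal is correct and follows essentially the same route as the paper: both parts reduce to the density formula of Proposition \ref{P15} (the paper evaluates the diagonal matrix element $\langle F\mid \mathbb P_K P(\Delta)\mathbb P_K F\rangle$ for arbitrary $F\in L^2_{\mathfrak K}(X,d\nu)$ and invokes self-adjointness/positivity, while you polarize on $\h_K\times\h_K$ --- equivalent bookkeeping), and minimality is proved in both cases by showing the orthogonal complement of $\mathcal S$ is trivial via the spanning condition (\ref{fk}). Your choice to test only against the countable orthonormal basis $\{f_i\}$ and pass to a single $\nu$-conull set is in fact slightly more careful than the paper's phrasing, which lets $\phi$ range over all of $\h_K$ with a $\phi$-dependent null set.
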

\begin{proof}
Let $F \in L^2_{\mathfrak K} (X, d\nu)$  and $\phi = \mathbb P_K F \in \h_K$. Then
\beano
\langle F \mid \mathbb P_K P(\Delta )\mathbb P_K F\rangle_{L^2_{\mathfrak K} (X, d\nu)}
     & = &  \langle \phi \mid P(\Delta )\phi \rangle\\
      & = &\Vert P(\Delta )\phi\Vert^2_{L^2_{\mathfrak K} (X, d\nu)}, \quad \text{using} \;\; P(\Delta) = P(\Delta)^2 = P(\Delta)^*\\
      & = & \int_{\Delta }\Vert \phi (x)\Vert^2_{\mathfrak K}\\
       & = &\langle \phi \mid a_K(\Delta)\phi\rangle_{ L^2_{\mathfrak K} (X, d\nu)} ,\quad \text{by}\;\; (\ref{density}),
\enano
and since $a_K(\Delta)G =0$ on any vector $G$ which is in the orthogonal complement of $\h_K$ in $L^2_{\mathfrak K} (X, d\nu)$,
$$ \langle F \mid \mathbb P_K P(\Delta )\mathbb P_K F\rangle_{L^2_{\mathfrak K} (X, d\nu)} = \langle F \mid a_K (\Delta) F\rangle_{L^2_{\mathfrak K} (X, d\nu)}, $$
from which (\ref{ext}) follows by virtue of the positivity of the operator $a_K(\Delta)$.

To prove the minimality of the extension, suppose that for a fixed $F\in L^2_{\mathfrak K} (X, d\nu)$,
$$ \langle F \mid P(\Delta )\phi \rangle_{L^2_{\mathfrak K} (X, d\nu)} = \int_\Delta \langle F(x) \mid \phi (x)\rangle_{\mathfrak K} \; d\nu (x) = 0, $$
for all $\Delta \in \mathcal B (X)$ and all $\phi \in \h_K$. Thus,
$$\langle F(x) \mid \phi (x)\rangle_{\mathfrak K} =0, $$
for $\nu$-almost all $x \in X$. By virtue of (\ref{fk}), as $\phi$ runs through $\h_K$, the vectors $\phi (x)$, for fixed $x$, span $\mathfrak K$. Thus, the above equation implies that $F(x) =0$ almost everywhere, i.e., $F = 0$ as a vector in $L^2_{\mathfrak K} (X, d\nu)$,  proving the density of the set $\mathcal S$.
\end{proof}

\subsection{A first example}\label{example}
We end this section with a simple example of a reproducing kernel and its associated coherent states, which is the quaternionic equivalent of the canonical coherent states of physics (see, for example, \cite{Alibk}). These coherent states have also been reported in \cite{alibharoy,TH}.

Consider the set of quaternionic monomials,
\be
f_n (q ) = \dfrac { q^n}{\sqrt{n!}}, \quad  n = 0, 1, 2, \ldots .
\label{q-monomials}
\en
These clearly satisfy conditions (\ref{n}) and (\ref{in-1}) for the definition of a
reproducing kernel, while since now $\mathfrak K = \mathbb H$, condition (\ref{fk})
is automatically satisfied.
We thus define a reproducing kernel
\be
K( q ,  q' ) = \sum_{n=0}^\infty \frac {{\overline{ q}}^n { q'}^n}{n!} .
\label{canker}
\en
Note that in this case, $\mathcal N (\mathfrak q ) = e^{\vert \mathfrak q\vert^2}$. In the
associated Hilbert space $\h_K$ we define, following (\ref{xi})and (\ref{qcsbasis}), the
coherent states,
\be
\xi_{\overline{ q}}
=  \sum_{n=0}^\infty f_n \frac {{\overline{ q}}^n} {\sqrt{n!}} .
\label{cancs}
\en

Since the (quaternionic) Hilbert space $\h_K$ of the kernel (\ref{canker}), is generated by the  monomials $\dfrac {{ q}^n}{\sqrt{n!}}$, it is expected to consist of left (slice) regular functions, in the sense of  \cite{colsabstru,Leo,Ric}. We now show that this is indeed the case and that $\h_K$ is in fact a subspace of an $L^2$-space.

Let $q=x_0+x_1i+x_2j+x_3k$. For $\theta_1 \in [0, \frac {\pi}2), \quad \theta_2, \; \phi \in (0, 2\pi] $,
introducing the unit vector,
\begin{equation}
\widehat{\mathbf n}(\theta_1, \phi)   = (\sin\theta_1\cos\phi, \;\sin\theta_1\sin\phi,\; \cos\theta_1 ),
\label{unit-vect}
\end{equation}
the unit vector $\widehat{\mathbf n}_0$ along the $x_0$-axis and writing $x = r\cos\theta_2, \; y = r\sin\theta_2$, we may write  $\bx = (x_0, x_1, x_2, x_3) \in \mathbb R^4$ as a point in the two-dimensional plane determined by $\widehat{\mathbf n}_0$ and $\widehat{\mathbf n}(\theta_1, \phi)$,
\begin{equation}
\mathbf x = \widehat{\mathbf n}_0 x + \widehat{\mathbf n}(\theta_1, \phi) y, \qquad x,\; y\in \mathbb R .
\label{cylind-rep}
\end{equation}
The unit vectors $\widehat{\mathbf n}(\theta_1, \phi)$ for all $\theta_1 \in [0, \dfrac {\pi}2), \;\; \phi \in (0, 2\pi]$, constitute all the  points on the upper hemisphere (without the boundary rim) of the unit sphere in $\mathbb R^3$. Thus, we have the alternative coordinatization for a point in $\mathbb R^4$,
\begin{equation}
\mathbf{x} = \widehat{\mathbf n}_0 x + \widehat{\mathbf n}(\theta_1, \phi) y, \qquad \theta_1 \in [0, \frac {\pi}2 ), \;\; \phi \in [0, 2\pi), \;\; x, y \in \mathbb R .
\label{cylind-rep2}
\end{equation}
(Of course, there are points of $\mathbb R^4$ which are left out in this choice of coordinates, e.g. points determined by unit vectors along the boundary of the hemisphere, but these will turn out to be a set of measure zero in the measures that we shall shortly introduce on $\mathbb H$).
Introducing the unit imaginary quaternion,
\be
J(\theta_1, \phi ) = i\sin\theta_1\cos\phi  + j\sin\theta_1\sin\phi +  k \cos\theta_1\;,
\label{gen-un-quat}
\en
we write
\be
q = r\sin\theta_2  +  r\; J(\theta_1, \phi )\cos\theta_2 = re^{ J(\theta_1, \phi )\;\theta_2},
\label{polar-form}
\en
which is the well-known polar representation of a quaternion.  For each fixed $J(\theta_1, \phi)$, the set of all quaternions $q$, in the equation above, represents a complex plane.
The Lebesgue measure on $\mathbb R^4$ in terms of these variables is easily worked out to be (see \cite{Mu1} for details)
\begin{equation}
dx_0\;dx_1\;dx_2\; dx_3 =  \vert y\vert\;\sqrt{x^2 + y^2}\; dx\; dy\;
d\Omega (\theta_1,\phi ), \qquad d\Omega (\theta_1,\phi )= \sin\theta_1\; d\theta_1\;d\phi .
\label{leb-meas2}
\end{equation}
However, we shall be working with a different measure on $\mathbb H$. Let us define,
\be
d\nu (q, \overline{q}) = \frac 1{2\pi^2}\;e^{-r^2}\; r\sin\theta_1\; dr\;d\theta_1\; d\theta_2 \; d\phi \; .
\label{gauss-meas}
\en
Then, using the polar representation above,  we easily find that
\be
\int_{r = 0}^\infty\!\int_{\theta_1 =0}^{\frac {\pi}2\!}\int_{\theta_2 = 0}^{2\pi}\!\int_{\phi =0}^{2\pi} \overline{q}^m q^n \; d\nu (q, \overline{q}):= \int_\mathbb H \overline{q}^m q^n \; d\nu (q, \overline{q}) = n!\;\delta_{mn}\;.
\label{quat-orthog}
\en
This means that the vectors $f_n \in \h_K$, which formed an orthonormal basis for $\h_K$ are also orthonormal with respect to the measure $d\nu (q, \overline{q})$, i.e., the scalar product of $\h_K$ is an $L^2$-product and $\h_K \subset L^2_\mathbb H (\mathbb H, d\nu (q, \overline{q}))$ as a subspace. Following basically the same argument as in \cite{bargmann} we can show that $\h_K$ consists entirely of left regular functions.

Moreover, it is now easily checked that the resolution of the identity
\be
\int_\mathbb H \vert \xi_{\overline{q}}\rangle\langle \xi_{\overline{q}}\vert\; d\nu (q, \overline{q}) = I_{\h_K} \; .
\label{qcs-resolid}
\en
holds on $\h_K$. In view of their similarity with the canonical coherent states,
we may call the normalized vectors  $\vert \alpha \rangle :=
\exp{[-\frac {\vert \alpha \vert^2}2]}\;\xi_\alpha, \;\; \alpha \in \mathbb H$, the
{\em canonical quaternionic coherent states\/}. These QCS have also been reported in
\cite{Thi1}.

\section{Some examples of reproducing kernel Hilbert spaces arising from orthogonal
polynomials}\label{rep-ker-orthog-poly}
In this section we build reproducing kernels and reproducing kernel Hilbert spaces starting from real orthogonal polynomials, then with their complexified versions and quaternionic extensions. Even though some of these kernels have been considered in the literature, in different contexts \cite{alieng, EA}, we work these out here as examples of our general construction of QCS and reproducing kernel spaces.

\subsection{Reproducing kernels using Hermite polynomials}\label{subsec-real-hp}
The real Hermite polynomials $H_n (x), \; n =0,1,2, \ldots , $ are defined over $\mathbb R$
and satisfy the orthogonality relations:
\be
  \int_{\mathbb R} H_m (x) H_n (x)\; e^{-x^2}\; dx = \sqrt{\pi} 2^n n!\; \delta_{mn}\; .
\label{real-herm-orth}
\en
They are obtainable using the formula:
\be
H_n (x) = (-1)^n e^{x^2} \left(\frac d{dx}\right)^n e^{-x^2}\; .
\label{hp-form1}
\en
On the Hilbert space
$$\h_{rhp} = L^2 (\mathbb R , e^{-x^2}dx), $$
the normalized polynomials,
$$
  h_n(x) = \frac 1{[{\sqrt{\pi}\; 2^n\; n!}]^{\frac 12}}\; H_n (x), $$
where $h_0 = \pi^{-\frac 14}$, the ground state, form an orthonormal basis:
$$ \langle h_m \mid h_n\rangle_{\h_{rhp}} = \delta_{mn}\; .$$
All this, of course is standard and well-known \cite{askey}. Moreover, while for all
$x,y \in \mathbb R$
$$ \sum_{n=0}^\infty h_n (x)h_n(y) = \infty, $$
for any $0<\varepsilon<1$,
\be
  \sum_{n=0}^{\infty} \varepsilon^n h_n (x)h_n(y) < \infty.
\label{herm-finsum}
\en
Thus, the functions $f^\varepsilon_n =\varepsilon^\frac n2 h_n$ can be used to build a real
reproducing kernel, using well-known summation formulae for Hermite polynomials:
\be
K_{\varepsilon} (x, y) = \sum_{n=0}^\infty f^\varepsilon_n (x) f^\varepsilon_n (y) =
\frac 1{\sqrt{\pi(1-\varepsilon^2)}} e^{-\frac {\varepsilon^2}{1-\varepsilon^2}
\left[x^2 + y^2 -
    \frac 2\varepsilon xy\right]}
  \qquad x,y \in \mathbb R\; ,
\label{other-ker}
\en
and
$$
  K_{\varepsilon} (x,x) = \sum_{n=0}^\infty \vert f^\varepsilon_n (x)\vert^2 =
  \sum_{n=0}^\infty\frac{\varepsilon^n \vert H_n (x)\vert^2}{\sqrt{\pi}\; 2^n \;n!}
    =  \frac 1{\sqrt{\pi(1-\varepsilon^2)}}\; e^{\frac {2\varepsilon}{1+\varepsilon}x^2}\; .
$$

On $L^2 (\mathbb R, e^{-x^2}\;dx)$ define the unbounded operator $A$, by its action on the basis
vectors $h_n$:
\be
  Ah_n = \varepsilon^{-n} h_n , \quad n = 0,1,2, \ldots , \quad \Longrightarrow \quad
     A = \sum_{n=0}^\infty \varepsilon^{-n}\vert h_n\rangle\langle h_n \vert\; .
\label{unbdd-op}
\en
This operator is unbounded, but (as can be seen easily) closed. Its inverse is bounded, trace
class, with $\text{Tr}[A^{-1}] = \dfrac 1{1- \varepsilon}$. Its domain, $\mathcal D( A)$,
consists
of all vectors $f = \sum_{n=0}^\infty c_n h_n \in L^2 (\mathbb R, e^{-x^2}\;dx)$, such that
$\sum_{n=0}^\infty \varepsilon^{-2n}\vert c_n\vert^2 < \infty$\;. On $\mathcal D (A)$
define a new scalar product
\be
 \langle f \mid g \rangle_{\h_{\varepsilon}} = \int_{\mathbb R}\overline{f(x)} (Ag)(x)\;
     e^{-x^2}\; dx\; .
\label{new-scprod}
\en
and denote by $\h_\varepsilon$ its completion with respect to this norm. Clearly, as  sets
$\h_{\varepsilon} \subset L^2 (\mathbb R, e^{-x^2}\;dx)$ and the vectors
$f_n^{\varepsilon} = A^{-\frac 12}h_n = \varepsilon^{\frac n2}h_n\; , \;\; n=0,1,2, \ldots,$
form an orthonormal basis of
$\h_{\varepsilon}$. Moreover, $\h_{\varepsilon}$ is a
reproducing kernel Hilbert space, with reproducing kernel $K_{\varepsilon} (x,y)$ in (\ref{other-ker}).
However, because of the nature of the scalar product (\ref{new-scprod}), it is not an
$L^2$-space. (This space has been studied in detail, in connection with Berezin-Toeplitz
quantization in \cite{alieng}.)  However, if we extend the Hermite polynomials to the complex plane,
i.e., if we consider them to be functions $H_n (z)$ and $f^\varepsilon_n (z)$ of a complex variable, then there the resulting reproducing kernel

\bea
  K_{\varepsilon} (z, \overline{w}) &= & \sum_{n=0}^\infty f^\varepsilon_n (z)
  \overline{f^\varepsilon_n (w)} =  \sum_{n=0}^\infty
  \frac {\varepsilon^n}{\sqrt{\pi}\; 2^n\; n!}\; H_n (z)H_n(\overline{w})\nonumber\\
    & = &
\frac 1{\sqrt{\pi(1-\varepsilon^2)}} e^{-\frac {\varepsilon^2}{1-\varepsilon^2}
\left[z^2 + \overline{w}^2 -
    \frac 2\varepsilon z\overline{w}\right]}
  \qquad z,w \in \mathbb C\; ,
\label{other-comp-ker}
\ena
is square-integrable in the sense of (\ref{resol}) and Proposition \ref{P12}. Indeed, we have \cite{karp,karp2,eindmey},
\be
\int_{\mathbb C}  \overline{f^\varepsilon_m (z)}  f^\varepsilon_n (z)\; d\nu_\varepsilon (z, \overline{z}) = \delta_{m,n},
\label{com-orthog}
\en
where
$$  d\nu_\varepsilon (z, \overline{z}) = \frac {\sqrt{1 -\varepsilon^2}}{2\varepsilon}\; \exp \left[-2\varepsilon\{ \frac {x^2}{1 + \varepsilon} +  \frac {y^2}{1 - \varepsilon}\}\right]\; dx\;dy\; , \qquad z = x+ iy, $$
so that,
\be
  \int_{\mathbb C}  K_{\varepsilon} (z, \overline{z}') K_{\varepsilon} (z', \overline{w})\; d\nu_\varepsilon (z', \overline{z}') =  K_{\varepsilon} (z, \overline{w})\; .
\label{comp-sq-int}
\en
Thus the functions $f^\varepsilon_n (z), \; n =0,1,2, \ldots, $ span a reproducing kernel subspace, $\h^{\text{hol}}_\varepsilon$, of the Hilbert space $L^2 (\mathbb C, d\nu_\varepsilon (z, \overline{z}))$, consisting of holomorphic functions and the restrictions of the functions in $\h^{\text{hol}}_\varepsilon$ to the real axis then consist exactly of all the functions in the Hilbert space $\h_\varepsilon$ above, with the real kernel (\ref{other-ker}).

  We now extend the  $f^\varepsilon_n, \; n =0,1,2, \ldots,$ to functions of a quaternionic variable:
\be
f^\varepsilon_n (q) =   \sqrt{\frac {\varepsilon^n}{\sqrt{\pi}\; 2^n\; n!}}\; H_n (q),
\label{quat-herm-pol}
\en
It turns out that these functions again define a quaternionic reproducing kernel and a Hilbert space. Indeed, it is clear that the functions $f^\varepsilon_n (q), \; n =0,1,2, \ldots,$ satisfy the conditions (\ref{n}) -- (\ref{fk}) with
$$
\mathcal N (q) =  \sum_{n=0}^\infty \vert f^\varepsilon_n (q)\vert^2 =  \sum_{n=0}^\infty
  \frac {\varepsilon^n}{\sqrt{\pi}\; 2^n\; n!}\; \vert H_n (q)\vert^2 =
  \frac 1{\sqrt{\pi(1-\varepsilon^2)}} e^{-\frac {\varepsilon^2}{1-\varepsilon^2}
\left[q^2 + \overline{q}^2 -
    \frac 2\varepsilon \vert q\vert^2\right]}.
$$
as follows immediately from (\ref{other-comp-ker}). Thus, the kernel (\ref{other-comp-ker}) extends to a quaternionic reproducing kernel
\be
K_{\varepsilon} (q, \overline{q}') =  \sum_{n=0}^\infty f^\varepsilon_n (q)
  \overline{f^\varepsilon_n (q')} =  \sum_{n=0}^\infty
  \frac {\varepsilon^n}{\sqrt{\pi}\; 2^n\; n!}\; H_n (q)H_n(\overline{q}')
\label{quat-herm-ker}
\en
However this time, in view of the general non-commutativity of $q$ and $q'$, we are unable to get a closed form expression for this kernel, as was possible for the complex and real cases above.

With the parametrization for a quaternion given in (\ref{polar-coords}) and writing $x = r\cos\theta_2, \; y= r\sin\theta_2$, let us introduce the measure,
\be
  d\nu_\epsilon (q , \overline{q}) = \frac {\sqrt{1 -\varepsilon^2}}{4\pi\varepsilon}\; \exp \left[-2\varepsilon\{ \frac {x^2}{1 + \varepsilon} +  \frac {y^2}{1 - \varepsilon}\}\right]\; dx\;dy\;\sin\theta_1\; d\theta_1\; d\phi\; ,
\label{quat-meas}
\en
with $x, y \in \mathbb R, \;\; \theta_1 \in [0, \frac {\pi}2), \;\; \phi \in (0, 2\pi]\; .$
Then, using the polar representation (\ref{polar-form}) of quaternions and the orthogonality relation (\ref{com-orthog}), we easily find the orthogonality condition for the quaternionic polynomials (see also \cite{Thi1}),
\be
   \int_\mathbb H f^\varepsilon_m (\overline{q}) f^\varepsilon_n (q) \; d\nu_\epsilon (q , \overline{q}) = \delta_{mn}\; ,
\label{quat-orthog2}
\en
i.e., the reproducing kernel $K_{\varepsilon} (q, \overline{q}')$ is square-integrable and the
associated Hilbert space $\h_\varepsilon^{\text{reg}}$ is a subspace of
$L^2_\mathbb H (\mathbb H, d\nu_\epsilon (q , \overline{q})$, consisting again of left
regular functions.

 We end this section by recalling that we constructed here, starting with the normalized real
Hermite polynomials, $h_n (x)$, a first reproducing kernel Hilbert space, $\h_\varepsilon$,
for which the polynomials $f^\varepsilon_n (x) = \varepsilon^{\frac n2}h_n (x)$ formed an
orthonormal basis. However, the associated reproducing kernel, $K_\varepsilon (x,y)$,
was not square-integrable.
We then  extended these polynomials to the complex plane, to obtain polynomials,
$f^\varepsilon_n (z)$, in a complex
variable, which then formed an orthonormal basis in a reproducing kernel
Hilbert space, $\h_\varepsilon^{\text{hol}}$, consisting of holomorphic functions. Moreover
the associated kernel, $K_\varepsilon (z, \overline{w})$, which was just the complex extension of the real kernel $K_\varepsilon (x,y)$, was square-integrable and the
Hilbert space $\h_\varepsilon^{\text{hol}}$ was  a subspace of
a complex $L^2$-space. Finally, we further extended the polynomials $f^\varepsilon_n (z)$ to
be functions, $f^\varepsilon_n (q)$, of a quaternionic variable. These also formed an orthonormal
basis in a right quaternionic Hilbert space, $\h_\varepsilon^{\text{reg}}$, of left regular
functions and once again the
associated kernel, $K_\varepsilon (q, \overline{q}')$ was square integrable and the Hilbert space
$\h_\varepsilon^{\text{reg}}$ a subspace of a quaternionic $L^2$-space.

\subsection{Reproducing kernels using Laguerre polynomials}
In this section we briefly show how the same sort of analysis may be done using the real Laguerre
polynomials, again successively obtaining three reproducing kernel Hilbert spaces --
the first consisting of functions of a real variable,
the second of a complex variable and the third a quaternionic Hilbert space of a
quaternionic variable (see, also \cite{EA,karp,karp2}).

The generalized real Laguerre polynomials are defined for any $\alpha > -1$ by
\be
 L^\alpha_n(x) = \sum_{k=0}^n\frac {\Gamma (n+ \alpha +1)}{\Gamma (k + \alpha +1)
              \; \Gamma (n-k+1)\; k!}\; (-x)^k\; , \qquad n = 0, 1,2, \ldots ,
\label{realaguer}
\en
which can also be obtained using
$$
  L^\alpha_n(x) = e^x x^{-\alpha}\; \frac 1{n!}\; \frac d{dx^n} (e^{-x} x^{n+\alpha})\; .
$$
They satisfy the orthogonality relations,
\be
  \int_0^\infty L_j^\alpha (x) \; L_k^\alpha (x)\; x^\alpha e^{-x}\; dx =
      \frac {\Gamma (k+\alpha +1)}{k!}\; \delta_{jk} \; .
\label{lagorthog}
\en
Once again, while
$$\sum_{n=0}^\infty\vert L^\alpha_n (x)\vert^2 = \infty, $$
one has, for any $\varepsilon \in (0, 1)$ (see, for example, \cite{karp,karp2}),
\be
  K^\alpha_\varepsilon (x,y) = \sum_{n=0}^\infty \varepsilon^n \widehat{L}^\alpha_n (x)\; \widehat{L}^\alpha_n (y) = \frac 1{1 -\varepsilon}\; \exp\left[ - \frac {\varepsilon ( x + y)}{1-\varepsilon}\right]\; (\varepsilon xy)^{-\frac {\alpha}2}\; I_\alpha \left( \frac {2\sqrt{\varepsilon xy}}{1 - \varepsilon}\right)\; ,
\label{realagker}
\en
where $\widehat{L}^\alpha_n$ are the normalized polynomials,
$$\widehat{L}^\alpha_n (x) = \left[\frac {k!}{\Gamma (k+\alpha +1)}\right]^{\frac 12}\; L^\alpha_n (x)\; , $$
and $I_\alpha$ is the modified Bessel function of the first kind.

As is well known (see, for example, \cite{szego}), the normalized Laguerre polynomials form an orthonormal basis for the Hilbert space $\h_{rlg} = L^2 (\mathbb R^+, x^\alpha\; e^{-x}\; dx )$. On the other hand, in view of (\ref{realagker}), the polynomials
\be
   f^{\alpha, \varepsilon}_n (x) = \varepsilon^{\frac n2}\widehat{L}^\alpha_n (x) =  \left[\frac {\varepsilon^n k!}{\Gamma (k+\alpha +1)}\right]^{\frac 12}\; L^\alpha_n (x)\; \qquad n =0,1,2, \ldots ,
\label{epslag}
\en
form an orthonormal basis for a reproducing kernel Hilbert space $\h^\alpha_\varepsilon$, with (\ref{realagker}) as the reproducing kernel. As sets, $\h^\alpha_\varepsilon \subset \h_{rlg}$ and as in the case with Hermite polynomials, the elements of $\h^\alpha_\varepsilon$ form the domain of an unbounded operator $A$ on $\h_{rlg}$, which acts on the  basis vectors in the manner $A \widehat{L}^\alpha_n = \varepsilon^{-n}\widehat{L}^\alpha_n$. Once again while $\h^\alpha_\varepsilon$ is not an $L^2$-space, the polynomials $f^{\alpha, \varepsilon}_n$, when extended to the complex plane, satisfy the orthogonality relations \cite{karp,eindmey},
\be
  \int_\mathbb C \overline{f^{\alpha, \varepsilon}_m (z)}f^{\alpha, \varepsilon}_n (z)\;
      d\nu^\alpha_\varepsilon (z, \overline{z}) = \delta_{mn},
\label{complagorth}
\en
where
$$d\nu^\alpha_\varepsilon (z, \overline{z}) = \frac {2c\;\varepsilon^{\frac {\alpha}2}}\pi\;\exp[2cx]\; \vert z\vert^\alpha\; K_\alpha \left( \frac {2\sqrt{\varepsilon}\;\vert z\vert}{1-\varepsilon}\right)\; dx\; dy, \qquad c = \frac \varepsilon{ 1 - \varepsilon}, \quad z = x + iy\; , $$
and $K_\alpha$ is the modified Bessel function of the second kind. Thus these polynomials form a closed subspace $\h^{\text{hol}}_{\alpha, \varepsilon}$, consisting of holomorphic functions, of the Hilbert space $L^2 (\mathbb C ,  d\nu^\alpha_\varepsilon (z, \overline{z}))$. Moreover, the restriction of the functions in this subspace to the real line constitute all the functions in Hilbert space $\h^\alpha_\varepsilon$, with the real reproducing kernel (\ref{realagker}).  The subspace $\h^{\text{hol}}_{\alpha, \varepsilon}$ is again a reproducing kernel Hilbert space with the kernel
\be
K^\alpha_\varepsilon (z,\overline{w}) = \sum_{n=0}^\infty f^{\alpha, \varepsilon}_n (z)\; \overline{f^{\alpha, \varepsilon}_n (w)} = \frac 1{1 -\varepsilon}\; \exp\left[ - \frac {\varepsilon ( z + \overline{w})}{1-\varepsilon}\right]\; (\varepsilon z\overline{w})^{-\frac {\alpha}2}\; I_\alpha \left( \frac {2\sqrt{\varepsilon z\overline{w}}}{1 - \varepsilon}\right)\; ,
\label{complagker}
\en
which is the extension of the kernel (\ref{realagker}) to $\mathbb C$. This kernel is square-integrable,
\be
  \int_{\mathbb C}  K_{\alpha, \varepsilon} (z, \overline{z}') K_{\alpha,\varepsilon} (z', \overline{w})\; d\nu^\alpha_\varepsilon (z', \overline{z}') =  K_{\alpha, \varepsilon} (z, \overline{w})\; .
\label{comp-lag-sq-int}
\en

It is now straightforward to build the corresponding quaternionic Laguerre polynomials and their orthogonality relations. We simply extend the polynomials $f^{\alpha, \varepsilon}_n (z)$ to $f^{\alpha, \varepsilon}_n (q), \; q \in \mathbb H$. Then, following (\ref{quat-meas}) and (\ref{quat-orthog2}) we see that these polynomials satisfy the orthogonality relation,
\be
  \int_\mathbb C \overline{f^{\alpha, \varepsilon}_m (q)}f^{\alpha, \varepsilon}_n (q)\;
      d\nu^\alpha_\varepsilon (q, \overline{q}) = \delta_{mn},
\label{quatlagorth}
\en
where again, with the same parametrization of the quaternions as in (\ref{polar-coords}), the measure $d\nu^\alpha_\varepsilon (q, \overline{q})$ is defined as
$$d\nu^\alpha_\varepsilon (q, \overline{q}) = \frac {c\;\varepsilon^{\frac {\alpha}2}}{\pi^2}\;\exp[2cx]\; \vert z\vert^\alpha\; K_\alpha \left( \frac {2\sqrt{\varepsilon}\;\vert z\vert}{1-\varepsilon}\right)\; dx\; dy\;\sin\theta_1\;d\theta_1\; d\phi \; ,  $$
(with $c = \dfrac \varepsilon{ 1 - \varepsilon}, \; z = x + iy$). These polynomials give an orthonormal basis for a quaternionic reproducing kernel Hilbert space, $\h^{\text{reg}}_{\alpha, \varepsilon}$, which is a subspace of the Hilbert space $L^2 (\mathbb H, d\nu^\alpha_\varepsilon (q, \overline{q})$. The associated reproducing kernel is
\be
K^\alpha_\varepsilon (q, \overline{q}') = \sum_{n=0}^\infty f^{\alpha, \varepsilon}_n (q)\overline{f^{\alpha, \varepsilon}_n (q')}  = \sum_{n=0}^\infty \frac{\varepsilon^n n!}{\Gamma (n+\alpha + 1)} L^\alpha_n (q) L^\alpha_n (\overline{q}')\; ,
\label{quatlagker}
\en
which is square-integrable with respect to the measure $d\nu^\alpha_\varepsilon (q, \overline{q})$:
\be
\int_{\mathbb H} K^\alpha_\varepsilon (q, \overline{q}'') K^\alpha_\varepsilon (q'', \overline{q}')\; d\nu^\alpha_\varepsilon (q'', \overline{q}'') = K^\alpha_\varepsilon (q, \overline{q}')\; .
\label{quatlag-sq-int}
\en

An entirely analogous construction could be carried out to obtain quaternionic Jacobi polynomials, using the fact that the real Jacobi polynomials have complex orthogonal extensions as well \cite{karp}.

\subsection{A Naimark extension}\label{subsec-naim-ext}
Following the construction outlined in Section \ref{subsec-naimark}, we can now show the existence of a quaternionic POV measure and a Naimark type of extension to a quaternionic PV measure. We do this only for the case of the Laguerre polynomials, a similar construction being possible for the other cases as well.

The vectors
\be
   \xi^{\alpha, \varepsilon}_{\overline{q}} = \sum_{n=0}^\infty f^{\alpha, \varepsilon}_n\; \overline{f^{\alpha, \varepsilon}_n (q)} \in \h^{\text{reg}}_{\alpha, \varepsilon}
\label{quatlagcs}
\en
form a family of (non-normalized) quaternionic coherent states, which satisfy the resolution of the identity
\be
   \int_{\mathbb H}\mid \xi^{\alpha, \varepsilon}_{\overline{q}}\rangle\langle \xi^{\alpha, \varepsilon}_{\overline{q}}\mid\; d\nu^\alpha_\varepsilon (q, \overline{q}) = I_{\text{reg}} ,
\en
on $\h^{\text{reg}}_{\alpha, \varepsilon}$. The operators
\be
  a^{\alpha, \varepsilon} (\Delta ) = \int_{\Delta}\mid \xi^{\alpha, \varepsilon}_{\overline{q}}\rangle\langle \xi^{\alpha, \varepsilon}_{\overline{q}}\mid\; d\nu^\alpha_\varepsilon (q, \overline{q}),
\label{lagpovmeas}
\en
for Borel sets $\Delta \subset \mathbb H$, form a normalized  POV-measure on $\h^{\text{reg}}_{\alpha, \varepsilon}$. The Naimark extension to a PV-measure, $P^{\alpha, \varepsilon} (\Delta )$, is then given by (see (\ref{pv-m}))
\be
   (P^{\alpha, \varepsilon} (\Delta )F)(q) = \chi_\Delta (q) F(q), \qquad F \in  L^2 (\mathbb H, d\nu^\alpha_\varepsilon (q, \overline{q})\; .
\label{lagpvmeas}
\en

%%%%%%%%%%%%%%%%%%%%%%%%%%%%%%%%%%%%%%%%%%%%%%%%%%%%%%%%%%%%%%%%%%%%%%%%%%%%%%
\section{Conclusion}
As mentioned in the Introduction, a general theory of reproducing kernels and reproducing kernel Hilbert spaces has been developed in this paper on right quaternionic Hilbert spaces, by analogy with their complex counterparts.  Using the reproducing kernels, a class of generalized CS, POV-measures and a Naimark extension theorem have been studied. The theory has been illustrated with quaternionic Hermite and Laguerre polynomials.
The construction presented here can be adapted to other classes of orthogonal polynomials as long as the real orthogonal polynomials have complex extensions and thereby quaternionic extensions. For example, the two indexed quaternionic Hermite polynomials
$$H_{n,m}(q,\oqu)=n!m!\sum_{j=0}^{\text{min}\{n,m\}}\frac{(\oqu)^{n-j}}{(n-j)!}\frac{q^{m-j}}{(m-j)!}$$
satisfy the orthogonality relation
$$\int_\quat \overline{H_{n,m}(q,\oqu)}~H_{l,k}(q,\oqu)\frac{1}{\pi}e^{-|z|^2}d^2zd\omega(u_q)=n!m!\delta_{nl}\delta_{mk}.$$
For fixed $n$ or $m$ we have
$$\sum_{m=0}^{\infty}\frac{|H_{n,m}(q,\overline{q})|^2}{n!m!}<\infty\quad\mbox{and}\quad
\sum_{n=0}^{\infty}\frac{|H_{n,m}(q,\overline{q})|^2}{n!m!}<\infty$$
respectively. Therefore, for each fixed $n$ or for each fixed $m$, a parallel reproducing kernel theory as for the case of $H_n(q)$ can be obtained (see \cite{Thi1}). It should be possible to develop a quaternionic version of coherent state quantization or integral quantization (see, for example, \cite{Alibk}) using the theory developed here. We plan to study this problem in future publications.

\end{document}